\definecolor{mygreen}{RGB}{28,172,0} 
\definecolor{mylilas}{RGB}{170,55,241}
\definecolor{mycolor}{RGB}{0, 204, 204}
\newtheorem{theorem}{Theorem}[section]
\newtheorem{assumption}[theorem]{Assumption}
\newtheorem{lemma}[theorem]{Lemma}
\newtheorem{remark}[theorem]{Remark}
\numberwithin{equation}{section}
\renewcommand{\eqref}[1]{(\ref{#1})}
\definecolor{mycolor}{RGB}{0, 204, 204}
\definecolor{mycolor1}{RGB}{   232, 247, 251 }
\definecolor{mycolor2}{RGB}{   1, 104, 137 }
\newtcolorbox{mybox}[1]{colback=red!5!white,colframe=red!75!black,fonttitle=\bfseries,title=#1}
\newtcolorbox{resultbox}[1]{colback=red!5!white,colframe=red!75!black,fonttitle=\bfseries,title=#1}
\newtcolorbox{applicationbox}[1]{colback=mycolor1,colframe=mycolor2,fonttitle=\bfseries,title=#1}
\newtcolorbox{citationbox}[1]{colback=white,colframe=white,fonttitle=\bfseries,title=#1}
\begin{document}
	\title{\textbf{The Effective Reproduction Number in the Kermack-McKendrick model with age of infection and reinfection}}
	\author{\textsc{JIAYI LI$^{(a),1}$, ZHIHUA LIU$^{(a),}$\footnote{Corresponding author. e-mail: zhihualiu@bnu.edu.cn.} ,ZIHAN WANG$^{(a),2}$}\\
		{\small \textit{$^{(a)}$School of Mathematical Sciences, Beijing Normal University,}} \\
		{\small \textit{Beijing 100875, People’s Republic of China.}} 
	}

	\maketitle
	
	\begin{abstract}
		This study introduces a novel epidemiological model that expands upon the Kermack-McKendrick model by incorporating the age of infection and reinfection. By including infection age, we can classify participants, which enables a more targeted analysis within the modeling framework. The reinfection term addresses the real-world occurrences of secondary or recurrent viral infections. In the theoretical part, we apply the contraction mapping principle, the dominated convergence theorem, and the properties of Volterra integral equations to derive analytical expressions for the number of newly infected individuals denoted by $N(t)$. Then, we establish a Volterra integral equation for $N(t)$ and study its initial conditions for both a single cohort and multiple cohorts. From this equation, we derive a method for identifying the effective reproduction number, denoted as $\mathcal{R}(t)$. In the practical aspect, we present two distinct methods and separately apply them to analyze the daily new infection cases from the 2003 SARS outbreak in Singapore and the cumulative number of deaths from the COVID-19 epidemic in China. This work effectively bridges theoretical epidemiology and computational modeling, providing a robust framework for analyzing infection dynamics influenced by infection-age-structured transmission and reinfection mechanisms.
	\end{abstract}
	\medskip 
	
	\noindent \textbf{Keywords:} \textit{Age of infection; single and multiple cohorts; effective reproduction number; Volterra integral equation; parameter identification. }
	
	\section{Introduction}

Following the 2002 outbreak of the Severe Acute Respiratory Syndrome Coronavirus (SARS-CoV-1) \cite{ref35} and the 2012 outbreak of the Middle East Respiratory Syndrome Coronavirus (MERS-CoV) \cite{ref36}, SARS-CoV-2, which causes COVID-19, marks the third coronavirus pandemic in two decades. Human-endemic coronaviruses, including types 229E, OC43, NL63, and HKU1, are among the primary pathogens responsible for the common cold \cite{ref37}. Unlike the human-endemic coronaviruses, SARS-CoV-2 has caused severe respiratory illness and significant mortality, particularly among older adults and those with underlying health conditions. SARS-CoV-2 spreads through respiratory droplets, contact transmission, and surfaces that have been contaminated. Clinical symptoms such as coughing and sputum production, with the viral load in infected individuals, are crucial factors influencing transmissibility, which can vary significantly within days throughout the infectious period. The existence of an individual infection period of variable length and magnitude among infected individuals is a proven fact in all contagious diseases. This is the reason why we introduce the age of infection in the following. Current directions on infectious diseases include the levels of population \cite{ref15,ref16,ref29}, individual \cite{ref2}, and viral \cite{ref17}. 

One of the fundamental epidemiological parameters is the basic reproduction number $\mathcal{R}_0$. This is defined as the average number of secondary infections caused by an infected individual in a fully susceptible population throughout his or her infectious period. In contrast to the classical methods in \cite{ref25} for calculating $\mathcal{R}_0$, \cite{ref2} uses the daily reproduction number $\mathcal{R}_0(a)$ to calculate $\mathcal{R}_0$ and shows that $\mathcal{R}_0(a)$ can be obtained by solving a Volterra integral equation that depends on the flow of new infected individuals considering a single cohort of individuals under the assumptions of a constant transmission rate and a fixed susceptible population. Considering factors such as herd immunity and quarantine measures, researchers avoid constant assumptions about the transmission rate and the number of susceptible individuals, and pay attention to the effective reproduction number $\mathcal{R}(t)$ in \cite{ref46,ref47}. This also gives rise to the problem of identifying the parameter $\mathcal{R}(t)$ in models incorporating the age of infection.

This paper proposes a method for identifying the effective reproduction number. By this method, given the number of newly infected individuals $N(t)$, the reproductive power $\mathcal{R}(t, a)$ (i.e., the rate at which an infectious individual at calendar time $t$ and infection-age $a$ produces secondary cases) can be derived in our model. Then, we can calculate the effective reproduction number $\mathcal{R}(t)$ by using the reproductive power $\mathcal{R}(t, a)$. In our model, there are two critical components in the infectious disease transmission process related to the age of infection: the initial distribution $i_0(a)$ and the probability of being infectious $\beta(a)$. Concerning the initial distribution $i_0(a)$, we use a Dirac mass as the initial distribution of our model and demonstrate the justification for using this extended initial condition. Recently, there has been a growing interest in measure-valued solutions for structured population models, with relevant assumptions regarding initial distributions discussed in \cite{ref2} and \cite{ref23}. Concerning the probability of infectiousness $\beta(a)$, we continue to use the assumptions of \(\beta(a)\) from \cite{ref2}. 

The Kermack-McKendrick model, formulated by William Kermack and Anderson McKendrick in 1927 \cite{ref43}, established the basis for compartmental models. In \cite{ref45}, a first-order hyperbolic partial differential equation is used to describe the McKendrick model known as the age-structured model. Current research \cite{ref2} explores the parameter identifiability of a Kermack-McKendrick model that incorporates the age of infection, primarily organized into $S$ and $I$ compartments. Reference \cite{ref25} studies a similar infection-age-structured model but with a different compartmental configuration of $(S, I, R)$. Building on these previous works, we introduce a reinfection term to extend the existing infection-age-structured model framework.

	This paper introduces an innovative epidemiological model based on the classical Susceptible-Infected-Recovered-Susceptible (SIRS) framework. The model shows a more detailed analysis of disease transmission patterns. 
    In Section \ref{Section2}, we first give the definitions of parameters and then present the model through two approaches: the formulation of partial differential equations and Volterra integral equations. 
    Section \ref{Section3} studies the model under two initial conditions: a single cohort and multiple cohorts of infected patients. After that, we define the reproductive power $\mathcal{R}(t,a)$ and establish a connection between the number of newly infected individuals $N(t)$ and the reproductive power $\mathcal{R}(t,a)$. Given the reproductive power $\mathcal{R}(t,a)$, we can derive the effective reproduction number $\mathcal{R}(t)$. Section \ref{Section4} uses the cumulative number of deaths from existing data to estimate the number of newly infected individuals $N(t)$, infected individuals, susceptible individuals, and more. This section also provides an estimation approach for data with uncertain numbers of newly infected individuals. Section \ref{Section5} is the core of the parameter identification methodology. It begins by discretizing the model and then proceeds to study two situations, providing an approach to identify the effective reproduction number $\mathcal{R}(t)$ by the number of newly infected individuals $N(t)$ in the discrete model.
    Section \ref{Section6} illustrates the application of the methodology proposed in Section \ref{Section4} and Section \ref{Section5}. In this section, we apply the 2003 SARS outbreak data of the newly infected individuals in Singapore to the parameter identification approach developed in Section \ref{Section5}. Then, we discuss the identification of the reproductive power $\mathcal{R}(t,a)$ using the cumulative number of deaths of the COVID-19 epidemic in China, where the data of the new infected individuals cannot be determined on certain days.
	
	\section{Kermack-McKendrick model with age of infection and reinfection}
	
	\label{Section2} 
    The infectiousness of infected individuals is known to depend on the time since the individual was infected, called the age of infection. According to \cite{ref1,ref29}, $i(t, a)$ represents the number of infected individuals with respect to infection age $a$ at time $t$. The integral 
	\begin{equation*}
		\int_{a_{1}}^{a_{2}}i(t,a)da,
	\end{equation*}
	means the number of infected at time $t$ with the age of infection between $a_{1}$ and $a_{2}$. Therefore, the total number of infected individuals at time $t$ is
	
	\begin{equation*}
		I(t)=\int_{0}^{\infty}i(t,a)da.  \tag{2.1}  \label{eq:2.1}
	\end{equation*}
	Let $\beta(a)$ be the probability of being infectious at the age of infection $a$. Then the
	total number of infectious individuals at
	time $t$ is 
	\begin{equation*}
		C(t)=\int_{0}^{\infty}\beta(a)i(t,a)da.
	\end{equation*}

	\subsection*{(a) Partial differential equation formulation of the
		model}
	
    The widely known form of the Kermack-McKendrick model in \cite{ref43,ref3} is governed by ordinary differential equations, while its original version in 1927 incorporated assumptions structured by age of infection. We continue works on \cite{ref1,ref2,ref25} and propose the Kermack-McKendrick model with reinfection and age of infection below.
	\begin{align*}
		\left\{ 
		\begin{array}{l}
			S^{\prime}(t) =-\tau(t)S(t)\int_{0}^{\infty}\beta(a)i(t,a)da+\delta R(t), \\ 
			\partial_{t}i+\partial_{a}i =-(\nu+D)i(t,a), \\ 
			R^{\prime}(t) =\nu\int_{0}^{\infty}i(t,a)da-\delta R(t), \\ 
			i(t,0) =\tau(t)S(t)\int_{0}^{\infty}\beta(a)i(t,a)da.
		\end{array}
		\right.  \tag{2.2}  \label{eq:2.2}
	\end{align*}
The initial conditions of this system are given by
	\begin{equation*}
		S(t_{0})=S_{0}\geq0, \tag{2.3}  \label{eq:2.3}
	\end{equation*}
    \begin{equation*}
        R(t_{0})=R_{0}\geq0,\tag{2.4}  \label{eq:2.4}
    \end{equation*}
    and%
	\begin{equation*}
		i(t_{0},a)=i_{0}(a)\in L_{+}^{1}(0,+\infty),  \tag{2.5}  \label{eq:2.5}
	\end{equation*}
	where $L_{+}^{1}(0,\infty)$ is the positive cone of non-negative integral function. In this model, \( S(t) \) represents the number of susceptible individuals at time \( t \), and \( t \rightarrow \tau(t) \) denotes the transmission rate at time $t$(as detailed in \cite{ref2}). The parameter \( \nu > 0 \) is the recovery rate of infected individuals, \( D \geq 0 \) is the mortality rate caused by the disease, and \( \delta\geq0 \) represents the rate of immunity loss. The function \( \beta(a) \) gives the probability of being infectious (capable of transmitting the pathogen) at the age of infection \( a \). Additionally, \( R(t) \) denotes the number of recovered individuals at time \( t \). 
	
	We will continue to adopt the following assumptions for the subsequent discussion. 
	
	\begin{assumption}
		\label{ASS2.1} We assume that 
		
		\begin{itemize}
			\item[$(i)$] \textrm{The transmission rate $t \to\tau(t)$ is a
				bounded continuous map from $[t_{0}, +\infty)$ in $[0, +\infty)$$( \left\|\tau(t)\right\|\leq \tau_{max}, \tau_{max}>0)$; }
			
			\item[$(ii)$] \textrm{The probability to be infectious at the age of
				infection $a \to\beta(a) \in L^{\infty}_{+}(0, +\infty)$ is a non-negative
				and measurable function of $a$ which is bounded by $1$.
			}
				\item[$(iii)$] Under ideal conditions $($especially complete pathogen inactivation post-mortem$)$, the domain of the function $\beta(a)$ is a bounded interval, i.e., there exists $a_+>0$ such that $\beta(a)=0, a\in[a_+,+\infty)$.
		\end{itemize}
	\end{assumption}
	
	From the model \hyperref[eq:2.2]{(\ref*{eq:2.2})}, we obtain
	\begin{equation*}
		I^{^{\prime}}(t)=\underset{(I)}{\underbrace{\tau(t)S(t)\int_{0}^{\infty}%
				\beta(a)i(t,a)da}}-\underset{(II)}{\underbrace{(\nu+D)\int_{0}^{\infty
				}i(t,a)da}},  \tag{2.6}  \label{eq:2.6}
	\end{equation*}
	where $(I)$ is the flow of new infected, and $(II)$ is the flow of
	individuals who die and recover. Let
	\begin{equation*}
		I(t_{0})=\int_{0}^{\infty}i_{0}(a)da=:I_{0}\geq0.  \tag{2.7}  \label{eq:2.7}
	\end{equation*} If the total number of individuals in the cohort is \( n \) at time \( t_0 \), we have 
	\begin{equation*}
		S_{0}+I_{0}+R_{0}=n.  \tag{2.8}  \label{eq:2.8}
	\end{equation*}
	
	\subsection*{(b) Volterra integral equation formulation of the model}%
	
	We use $N(t)$ to represent the flow of new infected
	individuals at time $t$. Then
	\begin{equation*}
		N(t)=\tau(t)S(t)\int_{0}^{\infty}\beta(a)i(t,a)da,  \tag{2.9}
		\label{eq:2.9}
	\end{equation*}
	i.e.,
	\begin{equation*}
		N(t)=\tau(t)S(t)C(t).  
	\end{equation*}
	For each $t\geq t_{0}$, using equation \hyperref[eq:2.9]{(\ref*{eq:2.9})} to substitute model \hyperref[eq:2.2]{(\ref*{eq:2.2})}, we obtain
	\begin{equation*}
		S^{\prime}(t)=-N(t)+\delta R(t),  \tag{2.10}  \label{eq:2.10}
	\end{equation*}
	Substituting \hyperref[eq:2.1]{(\ref*{eq:2.1})} into \hyperref[eq:2.2]{(\ref*{eq:2.2})}, we obtain
	\begin{equation*}
		R^{\prime}(t)=\nu I(t)-\delta R(t).  \tag{2.11}  \label{eq:2.11}
	\end{equation*}
	Similarly, combining \hyperref[eq:2.9]{(\ref*{eq:2.9})} and 
	\hyperref[eq:2.6]{(\ref*{eq:2.6})} yields
	\begin{equation*}
		I^{\prime }(t)=N(t)-(\nu +D)I(t).
	\end{equation*}
	
	 We use the method of characteristic lines to solve the second and fourth equations in \hyperref[eq:2.2]{(\ref*{eq:2.2})}. First, we have %
	\begin{equation}
		\left\{ 
		\begin{aligned}
			&\frac{dt}{da}=1, \\ 
			&\frac{di(t,a)}{dt}=\frac{\partial i(t,a)}{\partial t}+\frac{\partial i(t,a)}{%
				\partial a}\frac{da}{dt}=-(\nu+D)i(t,a).%
		\end{aligned}
		\right.  \tag{2.12}  \label{eq:2.12}
	\end{equation}
	Using the method of variable separation for \hyperref[eq:2.12]{(\ref*{eq:2.12})%
	}, we obtain
	\begin{equation*}
		i(t,a)=\left\{ 
		\begin{aligned}
			&N(t-a)e^{-(\nu+D)a},&&a\leq t-t_{0}, \\ 
			&i_{0}(a-(t-t_{0}))e^{-(\nu+D)(t-t_{0})},&&a\geq t-t_{0}.%
		\end{aligned}
		\right.   \tag{2.13}  \label{eq:2.13}
	\end{equation*}
    
	According to equation \hyperref[eq:2.9]{(\ref*{eq:2.9})}, equation 
	\hyperref[eq:2.13]{(\ref*{eq:2.13})} and Assumption \ref{ASS2.1}, we deduce
	that $t\rightarrow N(t)$ satisfies the following Volterra integral equation
	\begin{equation*}
		N(t)=\tau (t)S(t)\int_{t-t_{0}}^{\infty }\beta (a)e^{-(\nu
			+D)(t-t_{0})}i_{0}(a-(t-t_{0}))da+\tau (t)S(t)\int_{0}^{t-t_{0}}\beta
		(a)e^{-(\nu +D)a}N(t-a)da.  \tag{2.14}  \label{eq:2.14}
	\end{equation*}
	Let
	\begin{equation*}
		\Lambda(t)=\int_{t-t_{0}}^{\infty}\beta(a)e^{-(%
			\nu+D)(t-t_{0})}i_{0}(a-(t-t_{0}))da,
	\end{equation*}
thus, 
	\begin{equation*}
		N(t)=\tau (t)S(t)\left[\Lambda (t)+\int_{0}^{t-t_{0}}\beta (a)e^{-(\nu
			+D)a}N(t-a)da\right]=\tau (t)S(t)\left[\Lambda (t)+\int_{t_{0}}^{t}\beta (t-a)e^{-(\nu
			+D)(t-a)}N(a)da\right].
	\end{equation*}
    Based on the deductions from \hyperref[eq:2.1]{(\ref*{eq:2.1})}, \hyperref[eq:2.10]{(\ref*{eq:2.10})}, and \hyperref[eq:2.11]{(\ref*{eq:2.11})}, we know that \( S(t) \) in \hyperref[eq:2.14]{(\ref*{eq:2.14})} is related to \( N(t) \). To explore the solution $N(t)$ of \hyperref[eq:2.14]{(\ref*{eq:2.14})} further, the expression for $S(t)$ will be derived below. Firstly, we transform the expression for \( I(t) \) in \hyperref[eq:2.1]{(\ref*{eq:2.1})} into
	
	\begin{align*}
		I(t) &
		=\int_{t-t_{0}}^{\infty}e^{-(\nu+D)(t-t_{0})}i_{0}(a-(t-t_{0}))da+%
		\int_{0}^{t-t_{0}}e^{-(\nu+D) a}N(t-a)da   \\
		& =e^{-(\nu+D)(t-t_{0})}I_{0}+\int_{0}^{t-t_{0}}e^{-(\nu+D)a}N(t-a)da,  \tag{2.15}  \label{eq:2.15}
	\end{align*}
	by \hyperref[eq:2.7]{(\ref*{eq:2.7})} and \hyperref[eq:2.13]{(\ref*{eq:2.13})%
	}. Due to equation \hyperref[eq:2.4]{(\ref*{eq:2.4})} and \hyperref[eq:2.11]{(\ref*{eq:2.11})}, using the
	method of variable separation, we obtain 
	
	\begin{equation*}
		R(t)=e^{-\delta(t-t_{0})}R_{0}+\nu\int_{t_{0}}^{t}e^{-\delta
			(t-\tau)}I(\tau)d\tau.  \tag{2.16}  \label{eq:2.16}
	\end{equation*}
	Substituting $I(\tau)$ in \hyperref[eq:2.16]{(\ref*{eq:2.16})} by 
	\hyperref[eq:2.15]{(\ref*{eq:2.15})}, we have 
	
	\begin{equation*}
		R(t)=e^{-\delta(t-t_{0})}R_{0}+\nu\int_{t_{0}}^{t}e^{-\delta
			(t-\tau)}\left[e^{-(\nu+D)(\tau-t_{0})}I_{0}+\int_{0}^{\tau-t_{0}}e^{-(\nu
			+D)a}N(\tau-a)da\right]d\tau.  \tag{2.17}  \label{eq:2.17}
	\end{equation*}
	Based on equations \hyperref[eq:2.10]{(\ref*{eq:2.10})} and \hyperref[eq:2.3]{(\ref*{eq:2.3})}, substituting $R(t)$ by \hyperref[eq:2.17]{(%
		\ref*{eq:2.17})}, we obtain 
	\begin{align*}
    S(t) =& S_{0}-\int_{t_{0}}^{t}N(m)-\delta \biggl\{e^{-\delta (m-t_{0})}R_{0}+\\
        &\nu \int_{t_{0}}^{m}e^{-\delta (m-\tau )} \left[ e^{-(\nu +D)(\tau
			-t_{0})}I_{0}+\int_{0}^{\tau -t_{0}}e^{-(\nu +D)a}N(\tau -a)da\right]d\tau\biggr\}dm ,
		\tag{2.18}  \label{eq:2.18}
\end{align*}
    which means that equation \hyperref[eq:2.14]{(\ref*{eq:2.14})} is a nonlinear Volterra integral equation of the second kind about $N(t)$, which is introduced in \cite{ref59}.\newline
	\bigskip 
	\section{\ Kermack-McKendrick model starting from a\ single and
			multiple cohorts of infected patients}
	
	\label{Section3} The functions \( a \rightarrow i_{0}(a) \) and \( a \rightarrow \beta(a) \) play crucial roles in the parameter identifiability of our Kermack-McKendrick model. In studies \cite{ref23,ref27}, the Dirac mass is utilized as the initial condition for the model. However, this type of initial conditions don't satisfy the regularity for classical solutions of the associated partial differential equation. In \cite{ref27}, the authors rigorously define the concept of measure-valued solutions. By employing duality methods and semigroup theory, they establish the existence and uniqueness of these measure-valued solutions. This section aims to use limit distributions to approach the Dirac mass, leading to more reasonable results.
	\subsection*{\textrm{(a) A single cohort initial distribution for the Volterra integral equation}}
    Define the initial distribution 
    \begin{equation*}
        i_{0}(a)=I_0\delta_0(a)\tag{3.1}  \label{eq:3.1},
    \end{equation*}
where $\delta_0(a)$ is called the Dirac mass centered at age $0$. We use the Dirac mass to represent all the infected individuals who have the same age of infection $a = 0$ at time $t_0$. To understand this concept of Dirac mass centered at 0 in equation \hyperref[eq:3.1]{(\ref*{eq:3.1})}, we first consider an approximation with a class of functions,
\begin{equation*}
        i_{\kappa}(a)=I_0\kappa p(\kappa a)\tag{3.2},  \label{eq:3.2}
\end{equation*}
where $p(a)$ is an arbitrarily given non-negative integrable function satisfying
\begin{equation*}
		    \int_{0}^{+\infty}p(a)da=1.\tag{3.3}  \label{eq:3.3}
\end{equation*}
Recall that
\begin{equation*}
		\lim_{\kappa\rightarrow\infty}\int_{a_1}^{a_2}i_{\kappa}(a)da=\lim_{\kappa\rightarrow\infty}\int_{\kappa a_1}^{\kappa a_2}I_0p(s)ds=\left\{ 
		\begin{array}{l}
			0, \,\,\,\,\text{if}\,\,\,\,a_2>a_1>0,\\ 
			I_0,\,\,\text{if}\,\,\,\,a_2>a_1=0.
		\end{array}
		\right. 
	\end{equation*}
In other words, when $\kappa$ tends to $+\infty$,  the initial distribution of population $i_{\kappa}(a)$ is approaching
the case where all the infected individuals at time $t_0$ have the same age of infection $a = 0$.

To derive model \hyperref[eq:2.2]{(\ref*{eq:2.2})} with Dirac mass initial distribution as limit, we look back at the expression of $\Lambda(t)$ in equation \hyperref[eq:2.14]{(\ref*{eq:2.14})}
\begin{equation*}
	\Lambda(t)=e^{-(\nu+D)(t-t_{0})}\int_{0}^{+\infty}%
	\beta(a+(t-t_{0}))i_{0}(a)da.  \tag{3.4}  \label{eq:3.4}
\end{equation*}
Substituting $i_0(a)$ in equation \hyperref[eq:3.4]{(\ref*{eq:3.4})} with $%
i_\kappa(a)$ in equation \hyperref[eq:3.2]{(\ref*{eq:3.2})}, we obtain 
\begin{equation*}
	\Lambda_{\kappa}(t)=e^{-(\nu+D)(t-t_{0})}\int_{0}^{+\infty}\beta
	(a+(t-t_{0}))I_{0}\kappa p(\kappa a)da.  \tag{3.5}  \label{eq:3.5}
\end{equation*}

\begin{lemma}
\label{lem4.2}
	\textrm{Let Assumption \ref{ASS2.1} be satisfied, and assume in addition
		that $a\rightarrow\beta(a)$ is continuous. Then we have }
\begin{equation*}
	\underset{\kappa\rightarrow\infty}{\lim}\Lambda_{\kappa}(t)=I_{0}e^{-(%
		\nu+D)(t-t_{0})}\beta(t-t_{0}),  \tag{3.6}  \label{eq:3.6}
\end{equation*}
where the limit is uniform in $t\geq t_{0}$. That is 
\begin{equation*}
	\underset{\kappa\rightarrow\infty}{\lim}\underset{t\geq t_{0}}{\sup}%
	\left|\Lambda_{\kappa}(t)-I_{0}e^{-(\nu+D)(t-t_{0})}\beta(t-t_{0})\right|=0.
\end{equation*}
\end{lemma}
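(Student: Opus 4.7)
My plan is to reduce the statement to a standard mollifier approximation and then exploit the uniform continuity of $\beta$ coming from Assumption \ref{ASS2.1}(iii). First I would perform the change of variables $s=\kappa a$ in the defining integral \hyperref[eq:3.5]{(\ref*{eq:3.5})}, which gives
\begin{equation*}
\Lambda_{\kappa}(t)=I_{0}\,e^{-(\nu+D)(t-t_{0})}\int_{0}^{+\infty}\beta\!\left(\tfrac{s}{\kappa}+(t-t_{0})\right)p(s)\,ds.
\end{equation*}
Since $\int_{0}^{+\infty}p(s)\,ds=1$ by \hyperref[eq:3.3]{(\ref*{eq:3.3})}, the proposed limit can be written as $I_{0}e^{-(\nu+D)(t-t_{0})}\int_{0}^{+\infty}\beta(t-t_{0})p(s)\,ds$, so the difference becomes
\begin{equation*}
\Lambda_{\kappa}(t)-I_{0}e^{-(\nu+D)(t-t_{0})}\beta(t-t_{0})=I_{0}\,e^{-(\nu+D)(t-t_{0})}\int_{0}^{+\infty}\bigl[\beta\bigl(\tfrac{s}{\kappa}+(t-t_{0})\bigr)-\beta(t-t_{0})\bigr]p(s)\,ds.
\end{equation*}
Taking absolute values and using $e^{-(\nu+D)(t-t_{0})}\leq 1$ for $t\geq t_{0}$ reduces the claim to showing that the integral on the right tends to $0$ as $\kappa\to\infty$, uniformly in $t\geq t_{0}$.

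The main obstacle is precisely this uniformity: pointwise convergence for each fixed $t$ would follow immediately from the dominated convergence theorem (with dominator $2p(s)$), but the limit must be uniform in $t$. The key observation is that Assumption \ref{ASS2.1}(iii) forces $\beta$ to have bounded support $[0,a_{+}]$, and together with the added continuity hypothesis this makes $\beta$ uniformly continuous on $[0,+\infty)$ (continuous on the compact set $[0,a_{+}]$, then extended by zero). I would make this uniform continuity explicit as the bridge step.

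Finally I would carry out an $\varepsilon$-split: given $\varepsilon>0$, choose $M>0$ with $\int_{M}^{+\infty}p(s)\,ds<\varepsilon$, which is possible since $p\in L^{1}$, and choose $\eta>0$ from uniform continuity so that $|\beta(x)-\beta(y)|<\varepsilon$ whenever $|x-y|<\eta$. Then for $\kappa>M/\eta$ and any $s\in[0,M]$ one has $s/\kappa<\eta$, giving $|\beta(s/\kappa+(t-t_{0}))-\beta(t-t_{0})|<\varepsilon$ independently of $t$; on $[M,+\infty)$ I bound the integrand crudely by $2p(s)$ using $\|\beta\|_{\infty}\leq 1$. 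This yields
\begin{equation*}
\sup_{t\geq t_{0}}\int_{0}^{+\infty}\bigl|\beta\bigl(\tfrac{s}{\kappa}+(t-t_{0})\bigr)-\beta(t-t_{0})\bigr|p(s)\,ds\leq \varepsilon\cdot 1+2\varepsilon,
\end{equation*}
and multiplying by $I_{0}$ and letting $\varepsilon\to 0$ delivers the uniform limit stated in the lemma.
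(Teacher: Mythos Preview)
Your argument is correct, but it proceeds differently from the paper. After the same change of variables, you invoke Assumption~\ref{ASS2.1}(iii) (compact support of $\beta$) to obtain global uniform continuity of $\beta$ on $[0,+\infty)$, and this single fact handles the uniformity in $t$ in one stroke. The paper, by contrast, never uses part~(iii); instead it splits in the $t$-variable: for $t\geq t_{1}$ large, the exponential factor $e^{-(\nu+D)(t-t_{0})}$ alone forces the whole difference below $\varepsilon/2$, while for $t\in[t_{0},t_{1}]$ it uses uniform continuity of $\beta$ only on the compact interval $[0,t_{1}-t_{0}+\eta]$. Both proofs then treat the tail of $p$ identically. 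Your route is shorter and cleaner but leans on the compact-support hypothesis; the paper's route is slightly more robust in that it would go through for any bounded continuous $\beta$ (not necessarily compactly supported), at the cost of an extra case split.
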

The proof of Lemma \ref{lem4.2} is in Appendix \ref{app:A}. Define
\begin{equation}
	\Gamma(a)=e^{-(\nu+D)a}\beta(a),\forall a\geq0.  \tag{3.7}  \label{eq:3.7}
\end{equation}
If the function $\beta(a)$ is continuous and the initial distribution is \hyperref[eq:3.1]{(\ref*{eq:3.1})}, based on \cite{ref58}, equation \hyperref[eq:2.14]{(\ref*{eq:2.14})} becomes
\begin{equation*}
	N(t) =\tau(t) S(t) \left[ I_0 \times \Gamma\left(t-t_0\right) +
	\int_0^{t-t_0} \Gamma(a) \, N(t-a) da \right], \forall t \geq t_0.
	\tag{3.8}  \label{eq:3.8}
\end{equation*}
Recall that the domain of $i_0(a)$ is $L_{+}^{1}(0,+\infty)$ in equation \hyperref[eq:2.5]{(\ref*{eq:2.5})}, then we will extend the initial condition to a Dirac mass with the help of $i_{\kappa}(a)$. Applying equation \hyperref[eq:2.14]{(\ref*{eq:2.14})} and \hyperref[eq:2.18]{(\ref*{eq:2.18})}, the Kermack-McKendrick model can be
reformulated for $t\geq t_{0}$ as 
\begin{equation*}
	N_{\kappa }(t)=\tau (t)S_{\kappa }(t)\left[\Lambda _{\kappa
	}(t)+\int_{0}^{t-t_{0}}\beta (a)e^{-(\nu +D)a}N_{\kappa }(t-a)da\right] , \tag{3.9}
	\label{eq:3.9}
\end{equation*}%
where $\Lambda _{\kappa }$ is defined in \hyperref[eq:3.5]{(\ref*{eq:3.5})}
and 

\begin{align*}
    S_{\kappa}(t) =& S_{0} - \int_{t_{0}}^{t} N_{\kappa}(m) - \delta \biggl\{ e^{-\delta (m-t_{0})} R_{0} + \\
        &\nu \int_{t_{0}}^{m} e^{-\delta (m-\tau)} \left[ e^{-(\nu + D)(\tau - t_{0})} I_{0} + \int_{0}^{\tau - t_{0}} e^{-(\nu + D)a} N_{\kappa}(\tau - a) \, da \right] d\tau \biggr\} dm. \tag{3.10} \label{eq:3.10}
\end{align*}

\begin{theorem}
\label{Th47}
	\textrm{If Assumption \ref{ASS2.1} and the conditions in Appendix \ref{app:B} and Lemma \ref{lem4.2} are satisfied, then we have }
	
	\textrm{%
		\begin{equation}
			N(t)=\underset{\kappa\rightarrow\infty}{\lim}N_{\kappa}(t),  \tag{3.11}
			\label{eq:3.11}
		\end{equation}
		where the limit is uniform in $t$ on every closed and bounded interval of $%
		[t_{0},+\infty)$, and the map $t\rightarrow N(t)$ is the unique continuous
		solution of the Volterra integral equation \hyperref[eq:3.8]{(\ref*{eq:3.8})}. }
\end{theorem}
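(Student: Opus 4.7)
The plan is to fix an arbitrary closed bounded interval $[t_0, T]$, establish existence and uniqueness of a continuous $N$ solving (3.8) on $[t_0, T]$, and then show $N_\kappa \to N$ uniformly on that interval via a Grönwall-type comparison. Three ingredients will drive the argument: (a) a contraction-mapping argument for (3.8); (b) uniform-in-$\kappa$ a priori bounds on $N_\kappa$ and $S_\kappa$; and (c) the uniform convergence $\Lambda_\kappa(t) \to I_0 \Gamma(t-t_0)$ supplied by Lemma \ref{lem4.2}.

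For (a), I would view (3.8) together with the expression (2.18) for $S(t)$ as a single nonlinear Volterra equation for $N$ on $C([t_0, T])$, equipped with a weighted sup-norm $\|f\|_\alpha := \sup_{t \in [t_0,T]} e^{-\alpha(t-t_0)}|f(t)|$. Since $|\tau| \leq \tau_{\max}$, $0 \leq \beta \leq 1$, $\Gamma(a) \leq 1$, and the dependence of $S$ on $N$ in (2.18) is Lipschitz, choosing $\alpha$ sufficiently large makes the fixed-point operator a strict contraction; this is the kind of step that the conditions referenced in Appendix \ref{app:B} are designed to support. For (b), I would use the conservation-type inequality $S_\kappa + I_\kappa + R_\kappa \leq n$ (derived from (2.8) together with the observation that the only sink in (2.2) is disease-induced mortality) to obtain $0 \leq S_\kappa(t) \leq n$, then combine this with the uniform bound on $\Lambda_\kappa$ from Lemma \ref{lem4.2} and Assumption \ref{ASS2.1} to extract from (3.9) a linear Volterra inequality $N_\kappa(t) \leq C_1 + C_2 \int_{t_0}^t N_\kappa(s)\, ds$; Grönwall then yields a uniform bound $N_\kappa \leq M$ on $[t_0, T]$, and by (a) the same bound holds for $N$.

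For (c), I would subtract (3.8) from (3.9) and, after inserting $\pm\, \tau(t) S(t) \Lambda_\kappa(t)$ and $\pm\, \tau(t) S(t)\int_0^{t-t_0}\Gamma(a) N_\kappa(t-a)\, da$, decompose the difference into: a vanishing term $\tau(t) S(t)[\Lambda_\kappa(t) - I_0 \Gamma(t-t_0)]$, controlled uniformly by Lemma \ref{lem4.2}; a convolution term bounded by $\tau_{\max}\, n \int_0^{t-t_0} \Gamma(a) |N_\kappa(t-a) - N(t-a)|\, da$; and a term proportional to $|S_\kappa(t) - S(t)|$. Comparing (3.10) with (2.18), the latter satisfies $|S_\kappa(t) - S(t)| \leq C_3 \int_{t_0}^t |N_\kappa(s) - N(s)|\, ds$ with $C_3$ depending only on $T, \nu, D, \delta, \tau_{\max}$ and the a priori bounds. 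Combining the estimates produces
\begin{equation*}
	|N_\kappa(t) - N(t)| \leq \varepsilon_\kappa + C_4 \int_{t_0}^t |N_\kappa(s) - N(s)|\, ds,
\end{equation*}
with $\varepsilon_\kappa := \tau_{\max}\, n \sup_{s \in [t_0, T]} |\Lambda_\kappa(s) - I_0 \Gamma(s-t_0)| \to 0$, and Grönwall closes the argument: $\sup_{t \in [t_0, T]} |N_\kappa(t) - N(t)| \leq \varepsilon_\kappa e^{C_4(T-t_0)} \to 0$, which is (3.11) on the interval $[t_0,T]$; uniqueness from (a) ensures the limit is intrinsic and does not depend on $T$.

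The main obstacle is the nested Volterra coupling through $S_\kappa$ versus $S$: because $S_\kappa(t)$ itself depends on the whole past of $N_\kappa$ through the double integral in (3.10), the Lipschitz estimate $|S_\kappa - S| \lesssim \int |N_\kappa - N|$ must be derived with constants that are independent of $\kappa$ before the Grönwall step can close, and one has to check that the constant $C_4$ ultimately produced depends only on $T$ and the uniform a priori bounds. Once this coupling is tamed, the remaining pieces are standard Volterra-integral-equation manipulations supported by Assumption \ref{ASS2.1} and Lemma \ref{lem4.2}.
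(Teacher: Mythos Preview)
Your proposal is correct and takes a genuinely different route from the paper. The paper organizes the argument as: (i) Appendix~B shows $(N_\kappa)_\kappa$ is Cauchy in $C([t_0,t_0+T])$ and hence converges to some $A$; (ii) Appendix~C passes to the limit inside (3.9)--(3.10) via dominated convergence to show that $A$ solves (3.13); (iii) the main proof establishes that the fixed-point operator $X$ for (3.13) is a strict contraction on a small ball in $C([t_0,t_0+T])$ for $T$ sufficiently small, so the solution is unique and therefore $A=N$. You instead go directly: once $N$ is known to exist uniquely, you subtract (3.8) from (3.9), control $|S_\kappa-S|$ by $\int|N_\kappa-N|$, and close with Gr\"onwall. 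Your approach is more streamlined---it bypasses the separate Cauchy and dominated-convergence steps and yields an explicit rate $\sup|N_\kappa-N|\le \varepsilon_\kappa e^{C_4(T-t_0)}$---while the paper's decomposition is more modular and makes the role of each hypothesis (Appendix~B, Appendix~C, uniqueness) visible as a stand-alone lemma.

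One technical caveat on your step~(a): the fixed-point map $N\mapsto \tau\,F(N)\bigl[I_0\Gamma+\Gamma*N\bigr]$ is quadratic in $N$ (since $F(N)$ is affine in $N$), and for quadratic maps the weighted-norm trick alone does not produce a global contraction on all of $C([t_0,T])$; the Lipschitz constant scales with the ball radius. You need either to restrict to a ball first (using your a~priori bound $S\le n$ to linearize the outer factor, after which the weighted norm works cleanly), or to argue local contraction plus continuation. The paper chooses the latter, taking $T<\min\{1/2,\alpha\}$ small. Either fix is routine, but the sentence ``choosing $\alpha$ sufficiently large makes the fixed-point operator a strict contraction'' as written overstates what the weighted norm buys for a nonlinear Volterra operator.
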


	\begin{proof}
    To improve the readability of the subsequent content, we introduce a
new bounded continuous operator $F$, such that the operator $F$ acting on $N(t)$ produces $S(t)$, i.e., 
\begin{align*}
    F(N(t)) =& S_{0}-\int_{t_{0}}^{t}N(m)-\delta \biggl\{e^{-\delta (m-t_{0})}%
	R_{0}+\\
        &\nu \int_{t_{0}}^{m}e^{-\delta (m-\tau )}\left[e^{-(\nu +D)(\tau
		-t_{0})}I_{0}+\int_{0}^{\tau -t_{0}}e^{-(\nu +D)a}N(\tau -a)da\right]d\tau \biggr\}dm,
	\tag{3.12}  \label{eq:3.12}
\end{align*}
namely,
\begin{equation*}
	S(t)=F(N(t)),\,\,S_{\kappa}(t)=F(N_{\kappa}(t)),
\end{equation*}
and we rewrite equation \hyperref[eq:3.8]{(\ref*{eq:3.8})} as equation
\begin{equation}
	x(t)=\tau (t)F(x(t))\left[I_{0}\times\Gamma (t-t_{0})+\int_{0}^{t-t_{0}}\Gamma
	(a)x(t-a)da\right], \tag{3.13}  \label{eq:3.13}
\end{equation}
where $x(t)$ denotes the solution to equation \hyperref[eq:3.13]{(\ref*{eq:3.13})}.
		Define
		\begin{equation}
			X(x)(t)=\tau(t)F(x(t))\left[I_{0}\Gamma(t-t_{0})+\int_{0}^{t-t_{0}}\Gamma
			(a)x(t-a)da\right].  \tag{3.14}  \label{eq:3.14}
		\end{equation}
        Theorem \ref{Th47} only needs to be proven to hold on a closed and bounded interval. $\exists\, T>0$ such that $\Omega \subset C([t_{0}, t_{0}+T])$ be a bounded closed set and
         \begin{equation*}
             \forall x\in C([t_{0}, t_{0}+T]), ||x||\leq 2\tau_{max}S_0I_0.
         \end{equation*}
         \textrm{Based on Appendix \ref{app:B},
	we observe that both $N(t)$ in \hyperref[eq:3.8]{(\ref*{eq:3.8})} and $A(t)
	$ in Appendix \ref{app:C}  satisfy equation \hyperref[eq:3.13]{(\ref*{eq:3.13})}. In the following part, we will show the conditions where the solution to equation \hyperref[eq:3.13]{(\ref*{eq:3.13})} is unique and model \hyperref[eq:2.2]{(\ref*{eq:2.2})} with a single cohort of infected extending the earlier model of Kermack-Mckendrick with initial distribution $L_{+}^{1}(0,+\infty)$}.
         The existence and uniqueness of the solution to the integral equation \hyperref[eq:3.14]{(\ref*{eq:3.14})} is equivalent to prove that the operator \( X \) has a unique fixed point in the space \( \Omega \), where \( T > 0 \). 
		
		\textbf{Step1:} To prove that the operator $X:\Omega\rightarrow\Omega$. Based on Assumption \ref{ASS2.1} and the conditions in Appendix \ref{app:B} and Appendix \ref{app:C}, the mappings \( \tau(t) \), \( \Gamma(t) \), \( N(t) \), and \( F(\cdot) \) are continuous and bounded. According to equation \hyperref[eq:3.14]{(\ref*{eq:3.14})}, \( X(x)(t) \) is continuous in the interval \([t_{0}, t_{0}+T]\) for \( T > 0 \). Based on equation \hyperref[eq:3.14]{(\ref*{eq:3.14})},
        \begin{equation*}
            \left\Vert X(x) \right\Vert\leq \tau_{max}S_0I_0+2\tau_{max}S_0I_0T,
        \end{equation*}
        hence, if $T\leq\frac{1}{2}$, \( X: \Omega \rightarrow \Omega \) is proven.\\
		\textbf{Step2: }To prove that $X$ is a contraction mapping. Based on equation \hyperref[eq:3.12]{(\ref*{eq:3.12})}, we get
		\begin{align*}
			\left\Vert F(x_{1})-F(x_{2})\right\Vert  
            & =\Vert-\int_{t_{0}}^{t}%
			x_{1}(m)-\delta\left\{\nu\int_{t_{0}}^{m}e^{-\delta(m-\tau)}\left[\int_{0}^{\tau-t_{0}%
			}e^{-(\nu+D)a}x_{1}(\tau-a)da\right]d\tau\right\}dm\\
			&  +\int_{t_{0}}^{t}x_{2}(m)-\delta\left\{\nu\int_{t_{0}}^{m}e^{-\delta(m-\tau
				)}\left[\int_{0}^{\tau-t_{0}}e^{-(\nu+D)a}x_{2}(\tau-a)da\right]d\tau\right\}dm\Vert\\
			&  \leq(1+\nu\delta T^{2})T\left\Vert x_{1}-x_{2}\right\Vert.
		\end{align*}
         Therefore, $F(x)$ is Lipschitz continuous with respect to $x$. Let
		\[
		L_{F}:=1+\nu\delta T^{2}>0.
		\]
		Then we obtain
		\[
		\left\Vert F(x_{1})-F(x_{2})\right\Vert \leq L_{F}T\left\Vert x_{1}%
		-x_{2}\right\Vert.
		\]
		 For all $x_{1}, x_{2}\in\Omega$, we have
		
		\begin{align*}
			\left\Vert X(x_{1})-X(x_{2})\right\Vert &\leq L_{F}T\tau_{max} (I_{0}+T
\left\Vert x_{1}\right\Vert )\left\Vert
			x_{1}-x_{2}\right\Vert +T\tau_{max} \left\Vert F(x_{2}%
			)\right\Vert \left\Vert x_{1}-x_{2}\right\Vert\\
            &=\left[L_{F}(I_{0}+T\left\Vert x_{1}\right\Vert )+\left\Vert F(x_{2}%
			)\right\Vert \right]T \tau_{max} \left\Vert x_{1}-x_{2}\right\Vert\\
            &\leq \left[L_{F}(I_{0}+2T\tau_{max}S_0I_0)+S_0 \right]T \tau_{max} \left\Vert x_{1}-x_{2}\right\Vert
		\end{align*}
        There exists $\alpha>0$ such that whenever $T<\min\left\{\frac{1}{2},\alpha\right\}$ holds, 
		
		\[
		\left[L_{F}(I_{0}+2T\tau_{max}S_0I_0)+S_0 \right]T\tau_{max}<1.
		\]
		Thus, $X$ is a contraction mapping. Finally, according to the Banach contraction mapping theorem, there is a unique solution $x$ in equation \hyperref[eq:3.13]{(\ref*{eq:3.13})}, i.e.,
		\[
		X(x)=x,
		\]
		Therefore, \hyperref[eq:3.11]{(\ref*{eq:3.11})} is proved.
	\end{proof}
		Suppose the Kermack-McKendrick model starts from a single cohort of infected individuals. This means the initial distribution comprises $I_0$ individuals, all with an infection age $a=0$ at time $t_0$. In this case, the flow of new infections $t \rightarrow N(t)$ is the unique continuous solution of the following Volterra integral equation
		\begin{align*}
			N(t) = \tau(t) S(t) \left[I_{0} \Gamma(t - t_{0}) + \int_{0}^{t - t_{0}} \Gamma(a) N(t - a) \, da \right],\tag{3.15} \label{eq:3.15}
		\end{align*}
		where
        \begin{align*}
    S(t) =& S_{0}-\int_{t_{0}}^{t}N(m)-\delta\biggl\{e^{-\delta(m-t_{0})}%
			R_{0}+\\
        &\nu\int_{t_{0}}^{m}e^{-\delta(m-\tau)}\left[e^{-(\nu+D)(\tau
				-t_{0})}I_{0}+\int_{0}^{\tau-t_{0}}e^{-(\nu+D)a}N(\tau-a)da\right]d\tau\biggr\}dm.\tag{3.16}  \label{eq:3.16}
\end{align*}
	
	\begin{remark}
		When the initial distribution is a Dirac mass centered at $a=0$, the
		total number of infectious individuals at time $t$ is 
		\begin{equation*}
			C(t)=I_{0}\beta(t-t_{0})e^{-(\nu+D)(t-t_{0})}+\int_{0}^{t-t_{0}}%
			\beta(a)e^{-(\nu+D)a}N(t-a)da,\forall t\geq t_{0},
		\end{equation*}
		and the total number of infected individuals at time $t$ is 
		\begin{equation*}
			I(t)=I_{0}e^{-(\nu+D)(t-t_{0})}+\int_{0}^{t-t_{0}}e^{-(\nu+D)a}N(t-a)da,%
			\forall t\geq t_{0}.
		\end{equation*}
    	\end{remark}
		\subsection*{\textrm{(b) Multiple cohorts initial distribution for the model}}
		In the case of multiple cohorts, the initial distribution becomes 
		\begin{equation*}
			i_{0}(a)=I_{0}^{1}\delta_{a_{1}}(a)+I_{0}^{2}\delta_{a_{2}}(a)+...+I_{0}^{n}%
			\delta_{a_{n}}(a),\tag{3.17}  \label{eq:3.17}
		\end{equation*}
		where $0\leq a_{1}<a_{2}<...<a_{n}$ are the ages of infection for each cohort at
		time $t_{0}$, and $I_{0}^{j}$ is the number of infected individuals in the $j^{th}$-cohort at time $t_{0}$. Then the corresponding $i_{\kappa}(a)$ is
		\begin{equation*}
			i_{\kappa}(a)=\sum_{j=1}^{n}I_{0}^{j}\kappa p(\kappa (a-a_j))\mathds{1}_{(a-a_{j})}, 
		\end{equation*}
		where $p(a)$ is a non-negative function satisfying
		\begin{equation*}
		    \int_{0}^{+\infty}p(a)da=1,
		\end{equation*}
        and
		\begin{equation*}
			\mathds{1}_{(a-a_{j})}=\left\{
			\begin{array}{c}
				1, \,a \geq a_{j}, \\ 
				0, \,a < a_{j}.
			\end{array}
            \right.
		\end{equation*}
	Similar to the case of a single cohort in \hyperref[eq:3.5]{(\ref*{eq:3.5})}, we define
		
		\begin{equation*}
			\Lambda_{\kappa}^{j}(t):=
			I_{0}^{j}e^{-(\nu+D)(t-t_{0})}\int_{0}^{\infty}\beta(a+(t-t_{0}))\kappa
			p(\kappa(a-a_j))\mathds{1}_{(a-a_{j})}da.
		\end{equation*}
	
	\begin{theorem}
    \label{TH4.8}
		\textrm{If $\beta(a)$ is continuous and  \hyperref[eq:3.3]{(\ref*{eq:3.3})} is satisfied, then}
	\end{theorem}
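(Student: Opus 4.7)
My plan is to reduce Theorem \ref{TH4.8} to $n$ applications of the argument used for Lemma \ref{lem4.2}, one for each cohort, and then combine them by linearity of the integral. First I would perform the change of variables $s = \kappa(a - a_j)$ inside the $j$-th contribution. Because the indicator restricts integration to $a \geq a_j$, this change of variables absorbs the $\kappa$ in front of $p$ and gives
\begin{equation*}
\Lambda_\kappa^j(t) = I_0^j\, e^{-(\nu+D)(t-t_0)} \int_0^{+\infty} \beta\bigl(a_j + s/\kappa + (t-t_0)\bigr)\, p(s)\, ds,
\end{equation*}
which isolates the entire $\kappa$-dependence inside the argument of $\beta$.

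Next I would pass to the limit $\kappa \to \infty$ by the dominated convergence theorem. The integrand converges pointwise to $\beta(a_j + (t-t_0))\, p(s)$ by continuity of $\beta$, and is dominated by $p(s)$ since $\beta \leq 1$ by Assumption \ref{ASS2.1}(ii). Combined with the normalization \hyperref[eq:3.3]{(\ref*{eq:3.3})}, this yields
\begin{equation*}
\lim_{\kappa \to \infty} \Lambda_\kappa^j(t) = I_0^j\, e^{-(\nu+D)(t-t_0)}\, \beta\bigl(a_j + (t-t_0)\bigr),
\end{equation*}
and summing over $j = 1, \ldots, n$ gives the natural multi-cohort analogue of \hyperref[eq:3.6]{(\ref*{eq:3.6})}.

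The main obstacle will be upgrading this pointwise convergence to a limit that is uniform in $t \geq t_0$, which is the essential ingredient for transporting the contraction argument of Theorem \ref{Th47} to the multi-cohort setting. Here Assumption \ref{ASS2.1}(iii) is crucial: since $\beta$ is continuous and vanishes outside $[0, a_+]$, it is uniformly continuous on $[0, +\infty)$. Given $\varepsilon > 0$, I would pick $\eta > 0$ so that $|\beta(x) - \beta(y)| < \varepsilon$ whenever $|x - y| < \eta$ and then split the $s$-integral at $s = \kappa \eta$. On $[0, \kappa\eta]$ the shift $s/\kappa < \eta$ forces the integrand to be within $\varepsilon\, p(s)$ of its limit, uniformly in $t$; on $[\kappa\eta, +\infty)$ the tail integral $\int_{\kappa\eta}^{+\infty} p(s)\, ds$ tends to $0$ as $\kappa \to \infty$ independently of $t$ because $p \in L^1$. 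Together these bounds give
\begin{equation*}
\lim_{\kappa \to \infty}\, \sup_{t \geq t_0}\, \Bigl| \Lambda_\kappa^j(t) - I_0^j\, e^{-(\nu+D)(t-t_0)}\, \beta\bigl(a_j + (t-t_0)\bigr) \Bigr| = 0.
\end{equation*}

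Once the uniform limit is established for each cohort, the fixed-point argument of Theorem \ref{Th47} can be repeated verbatim with $I_0\, \Gamma(t-t_0)$ replaced by $\sum_{j=1}^n I_0^j\, e^{-(\nu+D)(t-t_0)}\, \beta(a_j + (t-t_0))$, and the multi-cohort Volterra equation for $N(t)$ then follows as a direct corollary. The only subtlety beyond the single-cohort proof is ensuring that each shifted argument $a_j + (t-t_0)$ still lies in the range where uniform continuity of $\beta$ applies; this is automatic from Assumption \ref{ASS2.1}(iii), which makes the extension essentially mechanical.
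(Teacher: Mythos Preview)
Your argument is correct, but the mechanism you use to obtain uniformity in $t$ differs from the paper's. You invoke Assumption~\ref{ASS2.1}(iii) to deduce that $\beta$ has compact support and is therefore uniformly continuous on $[0,+\infty)$; this lets a single $\eta$ work for all $t\geq t_0$ at once, and the dominated convergence theorem handles the pointwise limit cleanly. The paper, by contrast, never appeals to compact support: it exploits the exponential prefactor $e^{-(\nu+D)(t-t_0)}$ to choose a cutoff $t_1>t_0$ beyond which the whole expression is already $\leq \varepsilon/2$, and then uses ordinary continuity of $\beta$ on the compact range of arguments generated by $t\in[t_0,t_1]$ to pick $\eta$. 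Your route is shorter and more transparent, but it consumes a hypothesis---(iii)---that the theorem as stated does not list (it only asks for continuity of $\beta$ and \eqref{eq:3.3}); the paper's route matches the stated hypotheses exactly and would survive even if $\beta$ had unbounded support. Finally, note that the fixed-point discussion you append goes beyond what Theorem~\ref{TH4.8} actually claims; the theorem is solely about the limit of $\Lambda_\kappa^j(t)$, and the passage to the Volterra equation for $N(t)$ is treated separately in the paper.
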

	
	\begin{equation*}
		\underset{\kappa\rightarrow+\infty}{\lim}\Lambda_{%
			\kappa}^{j}(t)=I_{0}^{j}e^{-(\nu+D)(t-t_{0})}\beta(a_{j}+(t-t_{0})).
	\end{equation*}
	
	\begin{proof}
		The case $I_{0}^{j}=0$ is obviously valid, and below we only discuss the case $I_{0}^{j}>0$. We firstly obtain
		\begin{align*}
			&  I_{0}^{j}e^{-(\nu+D)(t-t_{0})}\int_{0}^{+\infty}\beta(a+(t-t_{0}))\kappa
			p(\kappa (a-a_j))\mathds{1}_{(a-a_{j})}da-I_{0}^{j}e^{-(\nu+D)(t-t_{0})}\beta(a_{j}%
			+(t-t_{0}))\\
			 =&I_{0}^{j}e^{-(\nu+D)(t-t_{0})}\int_{-a_{j}}^{+\infty
			}\beta(a_{j}+s+(t-t_{0}))\kappa p(\kappa s)\mathds{1}_{(s)}ds-I_{0}^{j}e^{-(\nu+D)(t-t_{0})}\int_{0}^{+\infty}\beta(a_{j}+(t-t_{0}%
			))\kappa p(\kappa s)ds\\
			  =&I_{0}^{j}e^{-(\nu+D)(t-t_{0})}\int_{0}^{+\infty}\beta(a_{j}+s+(t-t_{0}%
			))\kappa p(\kappa s)ds-I_{0}^{j}e^{-(\nu+D)(t-t_{0})}\int_{0}^{+\infty}%
			\beta(a_{j}+(t-t_{0}))\kappa p(\kappa s)ds\\
			  =&I_{0}^{j}e^{-(\nu+D)(t-t_{0})}\int_{0}^{\eta}\left[\beta(a_{j}+s+(t-t_{0}%
			))-\beta(a_{j}+(t-t_{0}))\right]\kappa p(\kappa s)ds+\\
			& I_{0}^{j}e^{-(\nu+D)(t-t_{0})}\int_{\eta}^{+\infty}\left[\beta(a_{j}%
			+s+(t-t_{0}))-\beta(a_{j}+(t-t_{0}))\right]\kappa p(\kappa s)ds.
		\end{align*}
		For all $\varepsilon>0$, there exists $t_{1}>t_{0}$ such that 
		
		\[
		I_{0}^{j}e^{-(\nu+D)(t-t_{0})}\underset{a\geq0}{\sup}\beta(a)\leq
		\frac{\varepsilon}{4},\,\forall t\in[t_{1},+\infty).
		\]
		Let $\eta>0$ be such that %
		
		\[
		a\leq\eta\Rightarrow|\beta(a+t)-\beta(t)|\leq\frac{\varepsilon}{2I_{0}^{j}},\forall
		t\in\lbrack t_{0},t_{1}].
		\]
		Then we have %
        \begin{equation*}
    \begin{aligned}
        &\left|I_{0}^{j}e^{-(\nu+D)(t-t_{0})}\int_{0}^{\eta}[\beta(a_{j}+s+(t-t_{0}))-\beta(a_{j}+(t-t_{0}))]\kappa p(\kappa s)ds\right| \\
        &\leq \left\{  
            \begin{aligned}
                &I^{j}_{0}e^{-(\nu+D)(t-t_{0})}2\sup_{a\geq0}\beta(a),  && \text{if } t \in [t_{1},+\infty), \\
                &I^{j}_{0}e^{-(\nu+D)(t-t_{0})}\int_{0}^{\eta}\frac{\varepsilon}{2I_{0}^{j}}\kappa p(\kappa a)da, && \text{if } t \in [t_{0},t_{1}].
            \end{aligned}
        \right. 
    \end{aligned}
\end{equation*}
		Therefore,%
		\begin{align*}
    &\left|\Lambda_{\kappa}^{j}(t)-I_{0}^{j}e^{-(\nu+D)(t-t_{0})}\beta(a_{j}%
			+(t-t_{0}))\right| \\
        &\leq\frac{\varepsilon}{2}+\left|I_{0}^{j}e^{-(\nu+D)(t-t_{0})}\int_{\eta}^{\infty
			}\left[\beta(a_{j}+s+(t-t_{0}))-\beta(a_{j}+(t-t_{0}))\right]\kappa p(\kappa s)ds\right|.
\end{align*}
		According to
        \begin{align*}
    &\left|I_{0}^{j}e^{-(\nu+D)(t-t_{0})}\int_{\eta}^{\infty}\left[\beta(a_{j}%
			+s+(t-t_{0}))-\beta(a_{j}+(t-t_{0}))\right]\kappa p(\kappa s)ds\right| \\
        &\leq2I^{j}_{0}e^{-(\nu+D)(t-t_{0})}\underset{a\geq0}{\sup}\beta(a)\left(1-\int_{0
			}^{\eta}\kappa p(\kappa a)da\right).
\end{align*}
		We obtain
		\begin{align*}
			&  2I^{j}_{0}e^{-(\nu+D)(t-t_{0})}\underset{a\geq0}{\sup}\beta(a)\left(1-\int_{0
			}^{\eta}\kappa p(\kappa a)da\right)=2I^{j}_{0}e^{-(\nu+D)(t-t_{0})}\underset{a\geq0}{\sup}\beta(a)\left(1-\int_{0
			}^{\kappa\eta}p(a)da\right)\rightarrow0,as\text{ }\kappa\rightarrow\infty.
		\end{align*}
	    Equivalently,
		\[
		\Lambda_{\kappa}^{j}(t)\rightarrow I_{0}%
		^{j}e^{-(\nu+D)(t-t_{0})}\beta(a_{j}+(t-t_{0}%
		))=\Gamma(a_{j}+(t-t_{0}))\frac{I_{0}^{j}%
		}{e^{-(\nu+D)a_{j}}},\kappa\rightarrow+\infty.
		\]
		
	\end{proof}
	
		 Assume that the initial distribution of infected individuals consists of $n>1$ cohorts
		of infected individuals with age of infection $a_{1}<a_{2}<...<a_{n}$ at time $t_{0}$, that is,
		\begin{equation*}
			i_{0}(a)=I_{0}^{1}\delta_{a_{1}}(a)+I_{0}^{2}\delta_{a_{2}}(a)+...+I_{0}^{n}%
			\delta_{a_{n}}(a),
		\end{equation*}
		where $I_{0}^{j}>0$ is the number of infected in the $j^{th}$-cohort at time $%
		t_{0}$. Then the flow of infected $t\rightarrow N(t)$ satisfies the
		following Volterra integral equation,
		\begin{equation*}
			N(t)=\tau(t)S(t)\left[\underset{j=1}{\overset{n}{\sum}}\Gamma(a_{j}+(t-t_{0}))%
			\frac{I_{0}^{j}}{e^{-(\nu+D)a_{j}}}+\int_{0}^{t-t_{0}}\Gamma(a)N(t-a)da\right], \forall t\geq t_0,\tag{3.18}  \label{eq:3.18}
		\end{equation*}
		where 
        \begin{align*}
    S(t) =& S_{0}-\int_{t_{0}}^{t}N(m)-\delta\biggl\{e^{-\delta (m-t_{0})}R_{0}+\\
        &\nu \int_{t_{0}}^{m}e^{-\delta (m-\tau )}\left[e^{-(\nu +D)(\tau
				-t_{0})}\underset{j=1}{\overset{n}{\sum}}%
			I_{0}^{j}+\int_{0}^{\tau -t_{0}}e^{-(\nu +D)a}N(\tau -a)da\right]d\tau\biggr\}dm .\tag{3.19}  \label{eq:3.19}
\end{align*}
	
	\subsection*{\textrm{(c) Effective reproduction number}}
    The basic reproduction number \( \mathcal{R}_0 \) is a key parameter in infectious disease models. Based on \cite{ref60}, the basic reproduction number \( \mathcal{R}_0 \) is defined as the average number of new cases of an infection caused by one typical infected individual, in a population consisting of susceptibles only. As studied in \cite{ref2}, $\mathcal{R}_0$ can be derived from the daily reproduction number \( \mathcal{R}_0(a) \) identified during the early stages of infectious disease transmission. 
    
    Another important parameter is the effective reproduction number $\mathcal{R}(t)$ (i.e., the instantaneous reproduction number at calendar time $t$), which represents the average number of people someone infected at time $t$ could expect to infect  if conditions remained unchanged (for \( t \geq t_0 \)) based on \cite{ref51}. As the disease progresses, this quantity reflects the impact of herd immunity and intervention measures, such as quarantine. To identify the effective reproduction number $\mathcal{R}(t)$, we study the reproductive power $\mathcal{R}(t,a)$ which is the rate at which an infectious individual at calendar time $t$ and infection-age $a$ produces secondary cases. Based on \cite{ref25}, we decompose the reproductive power \( \mathcal{R}(t,a) \) into the following form,
	\begin{equation*}
		\mathcal{R}(t,a)= \underset{ \mathrm{(A)}}{\underbrace{ \tau(t) }} \times\underset{ 
			\mathrm{(B)}}{\underbrace{ S(t) }} \times\underset{ \mathrm{(C)}}{%
			\underbrace{ \beta(a) }} \times\underset{ \mathrm{(D)}}{\underbrace{ e^{-(\nu+D)
					a} }},\,\forall
		a\in[0,a_+],  \tag{3.20}  \label{eq:3.20}
 	\end{equation*}
	where (A) is the rate of transmission at time $t$, (B) is the number of susceptible individuals at time $t$, (C) is the probability that infected individuals of infection age $a$ who are infectious (capable of transmitting the virus),
	(D) is the probability for an infected individual for $a$ days to still be infected. According to equation \hyperref[eq:3.7]{(\ref*{eq:3.7})}, equation \hyperref[eq:3.20]{(\ref*{eq:3.20})} can be rewritten as 	
	\begin{equation*}
		\mathcal{R}(t,a)=\tau(t)S(t)\Gamma(a),\,\forall
		a\in[0,a_+],  \tag{3.21}  \label{eq:3.21}
	\end{equation*}
	where $a$ denotes the age of infection. \\
	{\textbf{Effective reproduction number:}}
		The effective reproduction can be expressed as the integral of $\mathcal{R}(t,a)$ over all ages of infection:
		
		\begin{equation*}
			\mathcal{R}(t)=\int_{0}^{\infty}\mathcal{R}(t,a)da.\tag{3.22}  \label{eq:3.22}
		\end{equation*}
		If model \hyperref[eq:2.2]{(\ref*{eq:2.2})} starts from a single cohort of infected individuals composed of $I_0$ with an age of infection $a=0$ at time $t_0$ in \hyperref[eq:3.15]{(\ref*{eq:3.15})}, based on \hyperref[eq:3.21]{(\ref*{eq:3.21})}, we obtain 
		\begin{equation*}
			N(t)=\mathcal{R}(t,t-t_{0})I_{0}+\int_{0}^{t-t_0} \mathcal{R}(t,s)N(t-s)ds,\forall t\geq
			t_{0}.  \tag{3.23}  \label{eq:3.23}
		\end{equation*}
        
    \section{Connecting the data with the model}
    \label{Section4} 

	\subsection*{(a) The cumulative number of deaths $CD(t)$}
    Real world disease data is filled with uncertainties stemming from asymptomatic transmission, testing inaccuracies, and inconsistent reporting, as noted in \cite{ref5,ref29}. In contrast, mortality data is less affected by distortions caused by variations in diagnostic criteria, case definitions, and human factors, making it more unambiguous. We define a function \( t \rightarrow CD(t) \) as the cumulative number of deaths at time $t$, which is at least twice continuously differentiable. Following the study of \cite{ref55}, we define the cumulative number of deaths using 
	\begin{equation}
		CD^{\prime}(t)=D\int_{0}^{\infty}i(t,a)da.  \tag{4.1}  \label{eq:4.1}
	\end{equation}
    Substituting \hyperref[eq:2.15]{(\ref*%
		{eq:2.15})} into \hyperref[eq:4.1]{(\ref*{eq:4.1})}, we obtain
	\begin{equation*}
		CD^{\prime}(t)=D I(t)=D\left[
		e^{-(\nu+D)(t-t_{0})}I_{0}+\int_{t_{0}}^{t}e^{-(\nu+D)(t-\tau)}N(\tau)d\tau\right],
		\tag{4.2}  \label{eq:4.2}
	\end{equation*}
	and
	\begin{equation}
		I(t)=\frac{CD^{\prime}(t)}{D}.  \tag{4.3}  \label{eq:4.3}
	\end{equation}
	\textrm{By choosing $t=t_{0}$ we obtain }
	
	\begin{equation}
		I_{0}=\frac{CD^{\prime}(t_{0})}{D}  .\tag{4.4}  \label{eq:4.4}
	\end{equation}
From \hyperref[eq:4.2]{(\ref*{eq:4.2})}, we
	have 
	\begin{equation*}
		\int_{t_{0}}^{t}e^{(\nu+D)\sigma}N(\sigma)d\sigma=\frac{e^{(\nu+D)t}CD^{%
				\prime}(t)}{D}-e^{(\nu+D)t_{0}}I_{0}.\tag{4.5}  \label{eq:4.5}
	\end{equation*}
    Differentiating both sides of \hyperref[eq:4.5]{(\ref*{eq:4.5})}, we obtain
	\begin{equation}
		N(t)=\frac{(\nu+D)CD^{\prime}(t)+CD^{\prime\prime}(t)}{D},\forall t\geq
		t_{0}  \tag{4.6}  \label{eq:4.6}.
	\end{equation}
    By fitting the function $CD'(t)$, we can obtain $CD''(t)$ and then derive the explicit expression for $N(t)$. According to equations \hyperref[eq:2.8]{(\ref*{eq:2.8})} and \hyperref[eq:4.4]{(\ref*{eq:4.4})}, the expression of $R_{0}$ is 
	
	\begin{equation}
		R_{0}=n-S_{0}-I_{0}=n-S_{0}-\frac{CD^{\prime}(t_{0})}{D} 
		\tag{4.7}  \label{eq:4.7}.
	\end{equation}
	\textrm{Combining equation \hyperref[eq:4.7]{(\ref*{eq:4.7})} and \hyperref[eq:2.11]{(%
			\ref*{eq:2.11})} and using the method of constant variation, the
		expression for $R(t)$ is }
	\begin{equation*}
		R(t)=e^{-\delta(t-t_{0})}\left(n-S_{0}-\frac{CD^{\prime}(t_{0})}{D}%
		\right)+\int_{t_{0}}^{t}\frac{\nu e^{-\delta(t-\tau)}CD^{\prime}(\tau)}{D}d\tau 
		\tag{4.8}  \label{eq:4.8}.
	\end{equation*}
	\textrm{Connecting \hyperref[eq:2.10]{(\ref*{eq:2.10})} with \hyperref[eq:4.8]%
		{(\ref*{eq:4.8})}, we get }
	
	\begin{equation*}
		S^{\prime}(t)=-\frac{(\nu+D)CD^{\prime}(t)+CD^{\prime\prime}(t)}{D}%
		+\delta e^{-\delta(t-t_{0})}\left(n-S_{0}-\frac{CD^{\prime}(t_{0})}{D}\right)+\delta\int_{t_{0}}^{t}\frac {\nu e^{-\delta(t-\tau)}CD^{\prime}(\tau)}{D}d\tau.
	\end{equation*}
	\textrm{Integrating both sides of the above equation from $t_{0}$ to $t$, the
		number of susceptible individuals is }
	
	\begin{equation*}
		S(t)=S_{0}-\int_{t_{0}}^{t}\left\{\frac{(\nu+D)CD^{\prime}(\sigma)+CD^{\prime
				\prime}(\sigma)}{D}-\delta e^{-\delta(\sigma-t_{0})}(n-S_{0}-\frac{%
			CD^{\prime}(t_{0})}{D})-\delta\int_{t_{0}}^{\sigma}\frac{%
			\nu e^{-\delta(\sigma-\tau)}CD^{\prime}(\tau )}{D}d\tau\right\}d\sigma
            \tag{4.9}  \label{eq:4.9}.
	\end{equation*}
	
	\subsection*{(b) Key parameters $\beta(a)$ and $\tau(t)$}
   The parameter $\beta(a)$ is associated with both the host and the pathogen. In \cite{ref21,ref50}, the unimodal distribution is used as an example. This pattern reflects that a pathogen initially proliferates within the host, followed by a decline due to immune suppression or host death. Current biological researches focus on the dynamics of pathogen load, specifically on pathogen shedding. Due to variations in host physiological conditions, viral load (i.e., viral RNA levels) is used to represent \(\beta(a)\) based on \cite{ref33}. However, more complex distribution patterns may exist; for example, a bimodal distribution \cite{ref48} is associated with early- and late-stage HIV transmission, while a multimodal distribution \cite{ref49} is linked to malaria.

  In summary, viral load exhibits a variety of temporal patterns based on \cite{ref2,ref50,ref51}. Although the function \(\beta(a)\) may vary in different biological studies, it still satisfies the condition $(iii)$ in Assumption \ref{ASS2.1}. 

Next, the expression for $\tau(t)$ is divided into two cases.
When model \hyperref[eq:2.2]{(\ref*{eq:2.2})} starts with a single cohort of infected individuals, the expression for $\tau(t)$ is given by
\begin{equation*}
    \tau(t)=\frac{N(t)}{S(t)\left[ I_0 \Gamma(t-t_0) +
	\int_0^{t-t_0} \Gamma(a) \, N(t-a) da \right]},
\end{equation*}
which is derived from \hyperref[eq:3.15]{(\ref*{eq:3.15})}.
Combining cumulative death data $CD(t)$ with model \hyperref[eq:2.2]{(\ref*{eq:2.2})} and utilizing the expressions for $I_0$ and $N(t)$, provided in \hyperref[eq:4.4]{(\ref*{eq:4.4})} and \hyperref[eq:4.6]{(\ref*{eq:4.6})}, the expression for $\tau(t)$ is given by
\begin{equation*}
    \tau(t)=\frac{(\nu+D)CD'(t)+CD''(t)}{D S(t)\left[ \frac{CD'(t_{0})}{D} e^{-(\nu+D)(t-t_0)}\beta(t-t_0) +
	\int_0^{t-t_0} e^{-(\nu+D)a}\beta(a)\frac{(\nu+D)CD'(t-a)+CD''(t-a)}{D} da \right]},
\end{equation*}
where the expression of $S(t)$ is in equation \hyperref[eq:4.9]{(\ref*{eq:4.9})}.
When the model starts with multiple cohorts of infected individuals, given the values of $a_j$ and $I_0^j$ in equation \hyperref[eq:3.19]{(\ref*{eq:3.19})}, the expression for $\tau(t)$ can be 
\begin{equation*}
    \tau(t)=\frac{(\nu+D)CD'(t)+CD''(t)}{D S(t)[\underset{j=1}{\overset{n}{\sum}}\Gamma(a_{j}+(t-t_{0}))\frac{I_{0}^{j}}{e^{-(\nu+D)a_{j}}}+\int_{0}^{t-t_{0}}\Gamma(a)\frac{(\nu+D)CD'(t-a)+CD''(t-a)}{D} da]}, \tag{4.10}  \label{eq:4.10}
\end{equation*}
which is derived from the expression of $N(t)$ in \hyperref[eq:4.6]{(\ref*{eq:4.6})} and the expression of $S(t)$ in \hyperref[eq:4.9]{(\ref*{eq:4.9})}.
	\section{\textrm{Day by day Kermack–McKendrick model \hyperref[eq:2.2]{(\ref*{eq:2.2})}}}
	
	\label{Section5} 
    This section focuses on discretizing the Kermack-McKendrick model by taking discrete time steps \( \triangle t = 1 \), in other words, \( t = t_0 + k \), where \( k \in \mathbb{N} \). We derive discrete expressions for the functions \( N(t) \), \( R(t) \), \( S(t) \), \( \mathcal{R}(t) \) and \( \mathcal{R}(t, a) \) (where \( a = 0, 1, 2, \ldots, \lfloor a_+ \rfloor \)). This approach can also be extended to accommodate other time intervals. According to equation \hyperref[eq:2.10]{(\ref{eq:2.10})}, the temporal evolution of the susceptible population \( S(t) \) is given by
	\begin{equation*}
		S(t)=S_{0}-\underset{d=t_{0}}{\overset{t-1}{\sum}}%
		\left[N(d)-\delta R(d)\right].
	\end{equation*}
	According to equation \hyperref[eq:2.14]{(\ref%
		{eq:2.14})}, the flow of new infected
	individuals at time $t$ follows the
	discrete time Volterra integral equation 
	\begin{align*}
		N(t) & =\tau(t)S(t)\left[\underset{d=t-t_{0}}{\overset{\infty}{\sum}}%
		\beta(d)e^{-(\nu+D)(t-t_{0})}i_{0}(d-(t-t_{0}))+\underset{d=1}{%
			\overset{t-t_0}{\sum}}\beta(d)e^{-(\nu+D) d}N(t-d)\right] \\
		& =\tau(t)S(t)\left[\underset{d=t-t_{0}}{\overset{\infty}{\sum}}\Gamma (d)\frac{%
			i_{0}(d-(t-t_{0}))}{e^{-(\nu+D)(d-(t-t_{0}))}}+\underset{d=1}{%
			\overset{t-t_0}{\sum}}\Gamma(d)N(t-d)\right].  \tag{5.1}
		\label{eq:5.1}
	\end{align*}
    \begin{remark}
        According to equation \hyperref[eq:2.15]{(\ref{eq:2.15})}, the number of infected individuals at time $t$ is
    \begin{equation*}
        I(t)=e^{-(\nu+D)(t-t_{0})}I_{0}+\underset{d=1}{\overset{t-t_0}{\sum}}e^{-(\nu+D)d}N(t-d).
    \end{equation*}
	Based on equation \hyperref[eq:2.17]{(\ref{eq:2.17})}, the
	number of recovered individuals at time $t$ is 
	\begin{equation*}
		R(t)=e^{-\delta(t-t_{0})}R_{0}+\nu\underset{d_{1}=t_{0}}{\overset{
				t-1}{\sum}}e^{-\delta
			(t-d_1)}\left[e^{-(\nu+D)(d_1-t_{0})}I_{0}+\underset{d_{2}=1}{\overset{
				d_{1}-t_{0}}{\sum}}e^{-(\nu +D)d_2}N(d_1-d_2)\right].  
	\end{equation*}
    \end{remark}
    According to equation \hyperref[eq:3.22]{(\ref%
		{eq:3.22})}, in the discrete model, the expression of $\mathcal{R}(t,a)$ is
        \begin{equation*}
            \mathcal{R}(t,a)=\sum_{d=0}^{\lfloor a_+ \rfloor}\mathcal{R}(t,d).
        \end{equation*}
    Therefore, to derive the effective reproduction number $\mathcal{R}(t)$, we only need to identify the reproductive power $\mathcal{R}(t,a),\,a=0,1,...,\lfloor a_+ \rfloor$.
    \subsection*{(a) The calculation of $\mathcal{R}(t,a)$ in discrete case }
		The following analysis only considers that the Kermack-McKendrick model starts from a single cohort of infected individuals composed of $I_0$ individuals, all with an age of infection $a=0$ at time $t_0$. Then, equation \hyperref[eq:3.23]{(\ref{eq:3.23})} becomes 
		\begin{equation*} 
			N(t)=\mathcal{R}(t,t-t_{0})I_{0}+\underset{d=1}{\overset{t-t_{0}}{\sum}}%
			\mathcal{R}(t,d)N(t-d).
            \tag{5.2}
		\label{eq:5.2}
		\end{equation*}
		From \hyperref[eq:5.2]{(\ref{eq:5.2})}, we obtain
		\begin{equation*} 
			\mathcal{R}(a+t_0,a) = \frac{1}{I_0}\left[N(a+t_0)-\sum_{d=1}^{a} \mathcal{R}(a+t_0,d)  N(a+t_0-d)\right].\tag{5.3}
			\label{eq:5.3}
		\end{equation*}  
        To identify the parameter $\mathcal{R}(t,a)$ in the discrete case, we first extend the time origin of the multiple cohorts in \hyperref[eq:3.18]{(\ref{eq:3.18})} from $t_0$ to $t_0+k$.
        Assume that the initial distribution of infected individuals consists of $n_k\geq1(n_k,k\in\mathbb{Z}_{+})$ cohorts with an age of infection $a_k^{1}<a_k^{2}<...<a_k^{n_k}$ at the time $t_0+k(k>0)$. That is
		\begin{equation*}
			i_{0}^k(a)=i(t_0+k,a)=I_{k}^{1}\delta_{a_k^{1}}(a)+I_{k}^{2}\delta_{a_k^{2}}(a)+...+I_{k}^{n_k}\delta_{a_k^{n_k}}(a),
		\end{equation*}
		where $I_{k}^{j}$ is the number of infected individuals in the $j^{th}$-cohort at time $%
		t_{0}+k$. Based on equation \hyperref[eq:2.14]{(\ref*{eq:2.14})}, the flow of infected individuals $t\rightarrow N(t)$ satisfies 
		\begin{equation*}
			N(t)=\tau(t)S(t)\left[\underset{j=1}{\overset{n_k}{\sum}}\Gamma(a_k^j+(t-t_{0}-k))%
			\frac{I_{k}^{j}}{e^{-(\nu+D)a_k^{j}}}+\int_{0}^{t-t_{0}-k}\Gamma(a)N(t-a)da\right],\forall t\geq t_0+k,\tag{5.4}  \label{eq:5.4}
		\end{equation*}
		where $S(t)$ still satisfies \hyperref[eq:3.19]{(\ref*{eq:3.19})}.
        Based on equation \hyperref[eq:5.4]{(\ref*{eq:5.4})} and equation \hyperref[eq:3.21]{(\ref*{eq:3.21})}, the flow of infected individuals over time can be described by the following Volterra integral equation:
		
		\begin{equation*}
			N(t)=\underset{j=1}{\overset{n_k}{\sum}}\mathcal{R}(t,a_k^{j}+(t-t_{0}-k))%
			\frac{I_{k}^{j}}{e^{-(\nu+D)a_k^{j}}}+\int_{0}^{t-t_{0}-k}\mathcal{R}(t,a)N(t-a)da,\forall t\geq t_0+k.\tag{5.5}  \label{eq:5.5}
		\end{equation*}
        In this discrete-time framework, an outbreak that begins with a single cohort of infected individuals follows a specific temporal pattern. At time \( t_0 + k \) (where \( k \in \mathbb{N} \)), there are at most \( n_k = k + 1 \) cohorts present, with possible ages of infection denoted as \( a_k^j = j \) for \( j = 0, 1, \ldots, k \). 
        According to equation \hyperref[eq:5.5]{(\ref{eq:5.5})}, we obtain
        \begin{equation*}
            N(t)=\underset{j=0}{\overset{k}{\sum}}\mathcal{R}(t,j+(t-t_{0}-k))%
			\frac{I_{k}^{j}}{e^{-(\nu+D)j}}+\underset{d=1}{%
			\overset{t-t_0-k}{\sum}}\mathcal{R}(t,d)N(t-d),\forall t\geq t_0+k.\tag{5.6}
			\label{eq:5.6}
        \end{equation*}
        Additionally, we use \( I_k^j \) to represent the number of individuals with an infection age \( a_k^j \) at time $t_0+k$, which satisfies the following equation, 
        \begin{equation*}
         I_k^j=i(t_0+k,j),\,j = 0, 1, \ldots, k,\,k\in \mathbb{N},\, I_0^0=I_0\neq0.\tag{5.7}
			\label{eq:5.7}
        \end{equation*}In the discrete case, based on \hyperref[eq:2.13]{(\ref{eq:2.13})}, we obtain 
        \begin{equation*}
            N(t_0+k)=i(t_0+k,0)=I_k^0,
        \end{equation*}
         and 
\begin{equation*}
    i(t+1,a+1) =N(t-a)e^{-(\nu+D)(a+1)}=i(t,a)e^{-(\nu+D)},\, t=t_0+k,\,k\in\mathbb{Z}^+,\,a=0,1,2,...,\lfloor a_+ \rfloor.\tag{5.8}
			\label{eq:5.8}
\end{equation*}
        Combining equation \hyperref[eq:5.7]{(\ref{eq:5.7})} and equation \hyperref[eq:5.8]{(\ref{eq:5.8})}, we obtain
        \begin{equation*}
            \frac{I_k^j}{e^{-(\nu+D)j}}=I_{k-j}^0=N(t_0+k-j).
            \tag{5.9}
			\label{eq:5.9}
        \end{equation*}
        
       When an infectious disease begins with a single cohort of infected individuals with an age of infection $a=0$ at time $t_0$, it is not possible for individuals with an infection age of \( a \) (where \( a > 0 \)) to exist at time \( t_0 \). Therefore, analyzing the case \( \mathcal{R}(t_0, a) \) (for \( a > 0 \)) is meaningless. Furthermore, in this condition, individuals with an infection age of \( a \) (where \( a > t - t_0 \)) cannot appear at time \( t \), which makes studying the case \( \mathcal{R}(t, a) \) (for \( a > t - t_0 \)) equally meaningless.

       We begin by discussing the reproductive power $\mathcal{R}(t_0+k,a),\,0\leq k<\lfloor a_+ \rfloor + 1 ,\,0\leq a\leq k$. The analysis for the reproductive power $\mathcal{R}(t_0+k,a),\, k\geq\lfloor a_+ \rfloor + 1 ,\,0\leq a\leq k$ is similar and will be mentioned later. 
       \subsection*{(a.1)  The calculation for $\mathcal{R}(t_0+m,a),\,0\leq m<\lfloor a_+ \rfloor + 1 ,\,0\leq a\leq m$}

We present a method to obtain a linear system of equations for $\mathcal{R}(t_0+k,a),\,0\leq k<\lfloor a_+ \rfloor + 1 ,\,0\leq a\leq k$. Solving this system provides the values of \( \mathcal{R}(t_0+k,a) \) for \( a = 0, 1, \ldots, k \). First, we illustrate the process of obtaining the linear system of equations.

\textbf{Step $1$ $\mathcal{R}(t_0,0)$:} When $t=t_0$, the model starts from a single cohort of infected individuals. Substituting $t=t_0$ into equation \hyperref[eq:5.3]
{(\ref{eq:5.3})}, we obtain
\begin{equation*}
    \mathcal{R}(t_0,0)=\frac{N(t_0)}{I_0},
\end{equation*}
hence, $\mathcal{R}(t_0,0)$ exists and is unique.

\textbf{Step $2$ $\mathcal{R}(t_0+1,0),\mathcal{R}(t_0+1,1)$:} When $k=0$ and $t=t_0+1$, according to \hyperref[eq:5.6]
{(\ref{eq:5.6})}, we obtain
\begin{equation*}
    N(t_0+1)=\mathcal{R}(t_0+1,1)(I_0^0+N(t_0))
\end{equation*}
When $k=1$ and $t=t_0+1$, according to \hyperref[eq:5.6]
{(\ref{eq:5.6})}, we obtain
\begin{equation*}
    N(t_0+1)=\mathcal{R}(t_0+1,0)\frac{I_1^0}{e^{-(\nu+D)a_1^0}}+\mathcal{R}(t_0+1,1)\frac{I_1^1}{e^{-(\nu+D)a_1^1}}.
\end{equation*}
hence, if $N(t_0+1)\neq 0$, $\mathcal{R}(t_0+1,0)$ and $\mathcal{R}(t_0+1,1)$ exist and are unique.

\textbf{Step $3$ $\mathcal{R}(t_0+m,a)(a=0,1,...,m,\, 2\leq m< \lfloor a_+ \rfloor + 1)$: } 
The condition $t=t_0+m$ is satisfied in equation \hyperref[eq:5.6]
{(\ref{eq:5.6})} during this step. When $k=0$, we obtain
\begin{equation*}
    N(t_0+m)=\mathcal{R}(t_0+m,m)\frac{I_0^0}{e^{-(\nu+D)}a_0^0}+\underset{d=1}{\overset{m}{\sum}}\mathcal{R}(t_0+m,d)N(t_0+m-d).
\end{equation*}
When $k=1$, we obtain
\begin{equation*}
    N(t_0+m)=\mathcal{R}(t_0+m,m-1)\frac{I_1^0}{e^{-(\nu+D)a_1^0}}+\mathcal{R}(t_0+m,m)\frac{I_1^1}{e^{-(\nu+D)a_1^1}}+\underset{d=1}{\overset{m-1}{\sum}}\mathcal{R}(t_0+m,d)N(t_0+m-d).
\end{equation*}
When $k=s(s=2,...,m-1)$, it follows that 
\begin{equation*}
    N(t_0+m)=\underset{j=0}{\overset{s}{\sum}}\mathcal{R}(t_0+m,a_s^{j}+(m-s))\frac{I_{s}^{j}}{e^{-(\nu+D)a_s^{j}}}+\underset{d=1}{\overset{m-s}{\sum}}\mathcal{R}(t_0+m,d)N(t_0+m-d).
\end{equation*}
 When $k=m$, we obtain 
\begin{equation*}
    N(t_0+m)=\underset{j=0}{\overset{m}{\sum}}\mathcal{R}(t_0+m,a_m^{j})\frac{I_{m}^{j}}{e^{-(\nu+D)a_m^{j}}}.
\end{equation*}

Based on the above steps, we let $Co\mathcal{R}_m$ be the coefficient matrix of the linear system of equations which is\\
\begin{equation*}
    Co\mathcal{R}_m
    \scalebox{0.9}{%
  \begingroup
  \setlength{\arraycolsep}{3pt} 
  \renewcommand{\arraystretch}{0.9} 
  $
  \begin{pmatrix}
\mathcal{R}(t_0+m,0) \\
\mathcal{R}(t_0+m,1) \\
\mathcal{R}(t_0+m,2) \\
\vdots  \\
\mathcal{R}(t_0+m,m-2)\\
\mathcal{R}(t_0+m,m-1)\\
\mathcal{R}(t_0+m,m)
\end{pmatrix}= 
\begin{pmatrix}
N(t_0+m) \\
N(t_0+m) \\
N(t_0+m) \\
\vdots  \\
N(t_0+m) \\
N(t_0+m)\\
N(t_0+m)
\end{pmatrix}
  $,
  \endgroup
}
\tag{5.10}
			\label{eq:5.10}
\end{equation*}
where the matrix $Co\mathcal{R}_m$ of size $(m+1) \times (m+1),\,1\leq m<\lfloor a_+ \rfloor + 1$ is given by
\begin{equation*}
\scalebox{0.95}{%
  \begingroup
  \setlength{\arraycolsep}{3pt} 
  \renewcommand{\arraystretch}{1.0} 
  $
    \begin{pmatrix}
0 & N(t_0+m-1)  & \cdots &N(t_0+2)& N(t_0+1)&I_0^0+N(t_0) \\
0 & N(t_0+m-1)  & \cdots &N(t_0+2)& \frac{I_{1}^{0}}{e^{-(\nu+D)a_1^{0}}}+N(t_0+1)& \frac{I_{1}^{1}}{e^{-(\nu+D)a_1^{1}}} \\
\vdots & \vdots  & \ddots & \vdots  & \vdots & \vdots\\
0 & N(t_0+m-1)  & \cdots &\frac{I_{m-3}^{m-5}}{e^{-(\nu+D)a_{m-3}^{m-5}}}& \frac{I_{m-3}^{m-4}}{e^{-(\nu+D)a_{m-3}^{m-4}}}& \frac{I_{m-3}^{m-3}}{e^{-(\nu+D)a_{m-3}^{m-3}}}\\
0 & N(t_0+m-1)  & \cdots &\frac{I_{m-2}^{m-4}}{e^{-(\nu+D)a_{m-2}^{m-4}}}& \frac{I_{m-2}^{m-3}}{e^{-(\nu+D)a_{m-2}^{m-3}}}& \frac{I_{m-2}^{m-2}}{e^{-(\nu+D)a_{m-2}^{m-2}}}\\
0 & N(t_0+m-1)+\frac{I_{m-1}^{0}}{e^{-(\nu+D)a_{m-1}^{0}}} & \cdots &\frac{I_{m-1}^{m-3}}{e^{-(\nu+D)a_{m-1}^{m-3}}}& \frac{I_{m-1}^{m-2}}{e^{-(\nu+D)a_{m-1}^{m-2}}}& \frac{I_{m-1}^{m-1}}{e^{-(\nu+D)a_{m-1}^{m-1}}}\\
\frac{I_{m}^{0}}{e^{-(\nu+D)a_m^{0}}} & \frac{I_{m}^{1}}{e^{-(\nu+D)a_m^{1}}} & \cdots &\frac{I_{m}^{m-2}}{e^{-(\nu+D)a_m^{m-2}}}& \frac{I_{m}^{m-1}}{e^{-(\nu+D)a_m^{m-1}}} & \frac{I_{m}^{m}}{e^{-(\nu+D)a_m^{m}}} 
  \end{pmatrix}
  $
  \endgroup
  
}.
\end{equation*}
\begin{theorem}
\label{Th5.2} 
When $0\leq m<\lfloor a_+ \rfloor + 1$, if $\prod_{j=1}^{m}N(t_0+j),m\in\mathbb{Z}^+$, the system of linear equations in \hyperref[eq:5.10]
{(\ref{eq:5.10})} has a unique solution when $t=t_0+m$.
\end{theorem}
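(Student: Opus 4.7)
\emph{Proof proposal.} The plan is to show that the coefficient matrix $Co\mathcal{R}_m$ in \eqref{eq:5.10} is invertible precisely when the hypothesis $\prod_{j=1}^{m}N(t_0+j)\neq 0$ holds; the existence and uniqueness of $(\mathcal{R}(t_0+m,0),\dots,\mathcal{R}(t_0+m,m))$ will then be a direct consequence. The central observation that makes this tractable is identity \eqref{eq:5.9}, which replaces each awkward exponential coefficient $I_k^j/e^{-(\nu+D)a_k^j}$ by a single value $N(t_0+k-j)$. Once this substitution is made, every entry of $Co\mathcal{R}_m$ becomes of the form $N(t_0+q)$ for some $0\le q\le m$, and the matrix acquires a structured form amenable to an explicit determinantal computation.

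First, I would carefully identify, for each row $k\in\{0,\dots,m\}$, which indices $j$ receive a contribution from the first sum $\sum_{j'=0}^{k}$ of \eqref{eq:5.6}, from the second sum $\sum_{d=1}^{m-k}$, or from both. A short case analysis will show: the first column (the one multiplying $\mathcal{R}(t_0+m,0)$) vanishes for $k<m$ and equals $N(t_0+m)$ for $k=m$; within each row, the overlap happens exactly at $j=m-k$, producing a doubled coefficient $2N(t_0+k)$; every other nonzero coefficient of $\mathcal{R}(t_0+m,j)$ equals $N(t_0+m-j)$.

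Next, I would cofactor-expand along the first column, which has a single nonzero entry at the bottom, reducing the problem to computing the determinant of the $m\times m$ minor obtained by deleting the last row and first column. After reversing its columns, this minor takes the clean form
\begin{equation*}
M' = D + \mathbf{1}\mathbf{v}^{\top}, \qquad D=\mathrm{diag}\bigl(N(t_0),\dots,N(t_0+m-1)\bigr), \quad \mathbf{v}=\bigl(N(t_0),\dots,N(t_0+m-1)\bigr)^{\top},
\end{equation*}
a rank-one update of a diagonal matrix. The matrix determinant lemma then gives $\det(M')=(m+1)\prod_{q=0}^{m-1}N(t_0+q)$, and combining with the cofactor expansion yields $|\det(Co\mathcal{R}_m)|=(m+1)\prod_{q=0}^{m}N(t_0+q)$. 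Since $N(t_0)=I_0>0$ from \eqref{eq:5.7}, the stated hypothesis is exactly what is needed for non-vanishing of this determinant, giving invertibility of $Co\mathcal{R}_m$ and therefore a unique solution for the system \eqref{eq:5.10}.

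The main obstacle I anticipate lies in the first step: correctly sorting out which entries come from which sum, and proving in particular that the first column has a unique nonzero entry $N(t_0+m)$ at the bottom. Once this bookkeeping is in hand, the remaining determinantal calculation is a one-line application of the matrix determinant lemma. A small caveat is that the lemma requires $D$ invertible, that is $N(t_0+q)\neq 0$ for $0\le q\le m-1$, which is precisely the portion of the hypothesis not already absorbed by the cofactor factor $N(t_0+m)$.
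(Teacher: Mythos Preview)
Your proposal is correct and follows essentially the same strategy as the paper: both use identity \eqref{eq:5.9} to rewrite every entry of $Co\mathcal{R}_m$ as some $N(t_0+q)$, observe that the first column has a single nonzero entry $N(t_0+m)$ in the last row, and cofactor-expand there. The only difference is in the evaluation of the remaining $m\times m$ minor---the paper proceeds by explicit row and column operations, whereas you recognise it (after a column reversal) as a rank-one perturbation $D+\mathbf{1}\mathbf{v}^{\top}$ and invoke the matrix determinant lemma, obtaining the clean closed form $|\det(Co\mathcal{R}_m)|=(m+1)\prod_{q=0}^{m}N(t_0+q)$; this is tidier than the paper's elimination argument and makes the role of each $N(t_0+q)$ transparent, but the underlying approach is the same.
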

\begin{proof}
The case $m=0$ is obvious. We now prove the case for $1\leq m<\lfloor a_+ \rfloor + 1$.
    According to equation \hyperref[eq:5.9]
{(\ref{eq:5.9})}, the linear system of equations \hyperref[eq:5.10]
{(\ref{eq:5.10})} becomes
\begin{equation*}
\scalebox{0.95}{%
  \begingroup
  \setlength{\arraycolsep}{3pt} 
  \renewcommand{\arraystretch}{0.9} 
  $
  \begin{pmatrix}
0 & N(t_0+m-1)  & \cdots &N(t_0+2)& N(t_0+1)&I_0^0+N(t_0) \\
0 & N(t_0+m-1)  & \cdots &N(t_0+2)& 2N(t_0+1)& I_0^0 \\
\vdots & \vdots  & \ddots & \vdots& \vdots& \vdots \\
0 & N(t_0+m-1)  & \cdots &N(t_0+2)& N(t_0+1)& I_0^0\\
0 & N(t_0+m-1)  & \cdots &N(t_0+2)& N(t_0+1)& I_0^0\\
0 & 2N(t_0+m-1) & \cdots &N(t_0+2)& N(t_0+1)& I_0^0\\
N(t_0+m)& N(t_0+m-1) & \cdots &N(t_0+2)& N(t_0+1)& I_0^0
  \end{pmatrix}
  \begin{pmatrix}
\mathcal{R}(t_0+m,0) \\
\mathcal{R}(t_0+m,1) \\
\mathcal{R}(t_0+m,2) \\
\vdots  \\
\mathcal{R}(t_0+m,m-2)\\
\mathcal{R}(t_0+m,m-1)\\
\mathcal{R}(t_0+m,m)
\end{pmatrix}= 
\begin{pmatrix}
N(t_0+m) \\
N(t_0+m) \\
N(t_0+m) \\
\vdots  \\
N(t_0+m) \\
N(t_0+m)\\
N(t_0+m)
\end{pmatrix}.
  $
  \endgroup
}
\end{equation*}
The absolute value of the determinant of $Co\mathcal{R}_m$ is
\begin{align*}
|\det(Co\mathcal{R}_m)| &= N(t_0+m)\left|\det
\begin{pmatrix}
 N(t_0+m-1) & N(t_0+m-2) & \cdots &N(t_0+2)& N(t_0+1)&I_0^0+N(t_0) \\
 N(t_0+m-1) & N(t_0+m-2) & \cdots &N(t_0+2)& 2N(t_0+1)& I_0^0 \\
 \vdots & \vdots & \ddots & \vdots& \vdots& \vdots \\
 N(t_0+m-1) & N(t_0+m-2) & \cdots &N(t_0+2)& N(t_0+1)& I_0^0\\
 N(t_0+m-1) & 2N(t_0+m-2) & \cdots &N(t_0+2)& N(t_0+1)& I_0^0\\
 2N(t_0+m-1) &N(t_0+m-2) & \cdots &N(t_0+2)& N(t_0+1)& I_0^0
\end{pmatrix}\right| \\
&= N(t_0+m)\left|\det
\begin{pmatrix}
 N(t_0+m-1) & N(t_0+m-2) & \cdots &N(t_0+2)& N(t_0+1)&I_0^0+N(t_0) \\
 0 & 0 & \cdots &0& N(t_0+1)& -N(t_0) \\
 \vdots & \vdots & \ddots & \vdots& \vdots& \vdots \\
 0 & 0 & \cdots &0& 0& -N(t_0) \\
 0 & N(t_0+m-2) & \cdots &0& 0& -N(t_0) \\
 N(t_0+m-1) &0 & \cdots &0& 0& -N(t_0)
\end{pmatrix} \right|\\
&= \prod_{j=1}^{m}N(t_0+j)\left|\det
\begin{pmatrix}
1 & 1 & \cdots &1& 1&I_0^0+N(t_0) \\
 0 & 0 & \cdots &0& 1& -N(t_0) \\
 \vdots & \vdots & \ddots & \vdots& \vdots& \vdots \\
 0 & 0 & \cdots &0& 0& -N(t_0)\\
 0 & 1 & \cdots &0& 0& -N(t_0)\\
 1 &0 & \cdots &0& 0& -N(t_0)
\end{pmatrix}\right| \\
&= \prod_{j=1}^{m}N(t_0+j)
\left|\det\begin{pmatrix}
1 & 1 & \cdots &1& 1&I_0^0+(m+1)N(t_0) \\
 0 & 0 & \cdots &0& 1& 0 \\
 \vdots & \vdots & \ddots & \vdots& \vdots& \vdots \\
 0 & 0 & \cdots &0& 0& 0\\
 0 & 1 & \cdots &0& 0& 0\\
 1 &0 & \cdots &0& 0& 0
\end{pmatrix}\right| 
\end{align*}
Therefore, the existence and uniqueness of solutions to equation \hyperref[eq:5.10]{(\ref{eq:5.10})} are proved under the condition
\begin{equation*}
    \prod_{j=1}^{m}N(t_0+j)\neq 0
\end{equation*}

\end{proof}
\subsection*{(a.2)  The calculation for $\mathcal{R}(t_0+m,a),\, m\geq\lfloor a_+ \rfloor + 1 ,\,0\leq a\leq m$}
When $m\geq\lfloor a_+ \rfloor + 1$, we define $m=\lfloor a_+ \rfloor+w,w\in\mathbb{Z}^+$ and have the similar result of Theorem \ref{Th5.2}. Based on equation \hyperref[eq:5.6]
{(\ref{eq:5.6})}, we obtain
\begin{align*}
		N(t_0+\lfloor a_+ \rfloor+w) & =\underset{j=0}{\overset{k}{\sum}}\mathcal{R}(t_0+\lfloor a_+ \rfloor+w,a_k^{j}+(\lfloor a_+ \rfloor+w-k))%
			\frac{I_{k}^{j}}{e^{-(
            \nu+D)a_k^{j}}} \\
		& +\sum_{d=1}^{\lfloor a_+ \rfloor+w-k}\mathcal{R}(t_0+\lfloor a_+ \rfloor+w,d)N(t_0+\lfloor a_+ \rfloor+w-d), k=w,w+1,...,w+\lfloor a_+ \rfloor.\tag{5.11}
			\label{eq:5.11}
	\end{align*}
Based on equation \hyperref[eq:5.11]
{(\ref{eq:5.11})}, we obtain the linear system satisfied by $\mathcal{R}(t_0+\lfloor a_+ \rfloor+w,a)(a=0,1,...,\lfloor a_+ \rfloor)$ as
\begin{equation*}
    Co\mathcal{R}_m
    \scalebox{0.9}{%
  \begingroup
  \setlength{\arraycolsep}{3pt} 
  \renewcommand{\arraystretch}{0.9} 
  $
  \begin{pmatrix}
\mathcal{R}(t_0+\lfloor a_+ \rfloor+w,0) \\
\mathcal{R}(t_0+\lfloor a_+ \rfloor+w,1) \\
\mathcal{R}(t_0+\lfloor a_+ \rfloor+w,2) \\
\vdots  \\
\mathcal{R}(t_0+\lfloor a_+ \rfloor+w,\lfloor a_+ \rfloor-1)\\
\mathcal{R}(t_0+\lfloor a_+ \rfloor+w,\lfloor a_+ \rfloor)
\end{pmatrix}= 
\begin{pmatrix}
N(t_0+\lfloor a_+ \rfloor+w) \\
N(t_0+\lfloor a_+ \rfloor+w) \\
N(t_0+\lfloor a_+ \rfloor+w) \\
\vdots  \\
N(t_0+\lfloor a_+ \rfloor+w)\\
N(t_0+\lfloor a_+ \rfloor+w)
\end{pmatrix}
  $,
  \endgroup
}
\tag{5.12}
			\label{eq:5.12}
\end{equation*}
where the matrix $Co\mathcal{R}_m$ of size $(\lfloor a_+ \rfloor+1)\times(\lfloor a_+ \rfloor+1)$ is equal to
\begin{equation*}
\scalebox{0.85}{%
  \begingroup
  \setlength{\arraycolsep}{2.5pt}
  \renewcommand{\arraystretch}{0.45}
  \large 
  $
   \begin{pmatrix}
   0& N(t_0+\lfloor a_+ \rfloor+w-1) &  \cdots& N(t_0+w+1)& N(t_0+w)+\frac{I_{w}^{0}}{e^{-(\nu+D)a_{w}^{0}}}\\
   0 & N(t_0+\lfloor a_+ \rfloor+w-1) &  \cdots & N(t_0+w+1)+\frac{I_{w+1}^{0}}{e^{-(\nu+D)a_{w+1}^{0}}}& \frac{I_{w+1}^{1}}{e^{-(\nu+D)a_{w+1}^{1}}}\\
   \vdots & \vdots &  \ddots & \vdots  & \vdots\\
   0 & N(t_0+\lfloor a_+ \rfloor+w-1)  & \cdots & \frac{I_{w+\lfloor a_+ \rfloor-2}^{\lfloor a_+ \rfloor-3}}{e^{-(\nu+D)a_{w+\lfloor a_+ \rfloor-2}^{\lfloor a_+ \rfloor-3}}}& \frac{I_{w+\lfloor a_+ \rfloor-2}^{\lfloor a_+ \rfloor-2}}{e^{-(\nu+D)a_{w+\lfloor a_+ \rfloor-2}^{\lfloor a_+ \rfloor-2}}} \\
   0 & N(t_0+\lfloor a_+ \rfloor+w-1)+\frac{I_{w+\lfloor a_+ \rfloor-1}^{0}}{e^{-(\nu+D)a_{w+\lfloor a_+ \rfloor-1}^{0}}} &  \cdots & \frac{I_{w+\lfloor a_+ \rfloor-1}^{\lfloor a_+ \rfloor-2}}{e^{-(\nu+D)a_{w+\lfloor a_+ \rfloor-1}^{\lfloor a_+ \rfloor-2}}}&\frac{I_{w+\lfloor a_+ \rfloor-1}^{\lfloor a_+ \rfloor-1}}{e^{-(\nu+D)a_{w+\lfloor a_+ \rfloor-1}^{\lfloor a_+ \rfloor-1}}} \\
   \frac{I_{w+\lfloor a_+ \rfloor}^{0}}{e^{-(\nu+D)a_{w+\lfloor a_+ \rfloor}^{0}}} & \frac{I_{w+\lfloor a_+ \rfloor}^{1}}{e^{-(\nu+D)a_{w+\lfloor a_+ \rfloor}^{1}}} &  \cdots & \frac{I_{w+\lfloor a_+ \rfloor}^{\lfloor a_+ \rfloor-1}}{e^{-(\nu+D)a_{w+\lfloor a_+ \rfloor}^{\lfloor a_+ \rfloor-1}}} & \frac{I_{w+\lfloor a_+ \rfloor}^{\lfloor a_+ \rfloor}}{e^{-(\nu+D)a_{w+\lfloor a_+ \rfloor}^{\lfloor a_+ \rfloor}}}
  \end{pmatrix}.
  $
  \endgroup
}
\end{equation*}
According to equation \hyperref[eq:5.9]
{(\ref{eq:5.9})}, the coefficient matrix $Co\mathcal{R}_m$ of the linear system of equations \hyperref[eq:5.12]
{(\ref{eq:5.12})} becomes\\
\begin{equation*}
\scalebox{0.9}{
$
  \begin{pmatrix}
   0 & N(t_0+\lfloor a_+ \rfloor+w-1) &  \cdots &N(t_0+w+2)& N(t_0+w+1)& 2N(t_0+w)\\
   0 & N(t_0+\lfloor a_+ \rfloor+w-1) & \cdots &N(t_0+w+2)& 2N(t_0+w+1)& N(t_0+w)\\
   \vdots & \vdots  & \ddots & \vdots \\
   0 & N(t_0+\lfloor a_+ \rfloor+w-1) & \cdots &N(t_0+w+2)& N(t_0+w+1)& N(t_0+w) \\
   0 & 2N(t_0+\lfloor a_+ \rfloor+w-1) & \cdots &N(t_0+w+2) & N(t_0+w+1)&N(t_0+w) \\
   N(t_0+\lfloor a_+ \rfloor+w) &  N(t_0+\lfloor a_+ \rfloor+w-1) &  \cdots & N(t_0+w+2)& N(t_0+w+1) & N(t_0+w)
  \end{pmatrix}
  .
$
}
\end{equation*}
Similar to the proof of Theorem \ref{Th5.2}, the condition for the existence and uniqueness of the solution to equation \hyperref[eq:5.12]
{(\ref{eq:5.12})} is $\prod_{j=0}^{\lfloor a_+ \rfloor}N(t_0+\lfloor a_+ \rfloor+w-j)\neq0$.

	\medskip 
	\section{Application}
	\label{Section6}
    Section \ref{Section5} provides a method: In the discrete situation, if the number of newly infected individuals is known, the reproductive power can be calculated, which can be used to further derive the effective reproduction number. However, when new pathogens emerge, a lack of medical knowledge and the absence of standardized diagnostic protocols can lead to delayed detection during the initial transmission phases. This results in incomplete early data records. We demonstrate two methods applied to the data to deal with different situations.
    \subsection*{\textrm{(a)Type 1: Well-documented infection records}}

    The daily new infection records from Singapore's 2003 outbreak of Severe Acute Respiratory Syndrome (SARS) are complete enough to allow for a direct computational analysis of $\mathcal{R}(t,a)$. When working with real-world data on the number of newly infected individuals, factors such as measurement errors, reporting lags, and methodological inconsistencies can introduce significant noise and irregularities. Applying a 7-day moving average helps to reduce noise in the data and provides a clearer representation of the underlying structure and primary trends. This process provides a more reliable foundation for subsequent modeling, analysis, and decision-making.
 \begin{figure}
     \centering
     \includegraphics[width=0.7\linewidth]{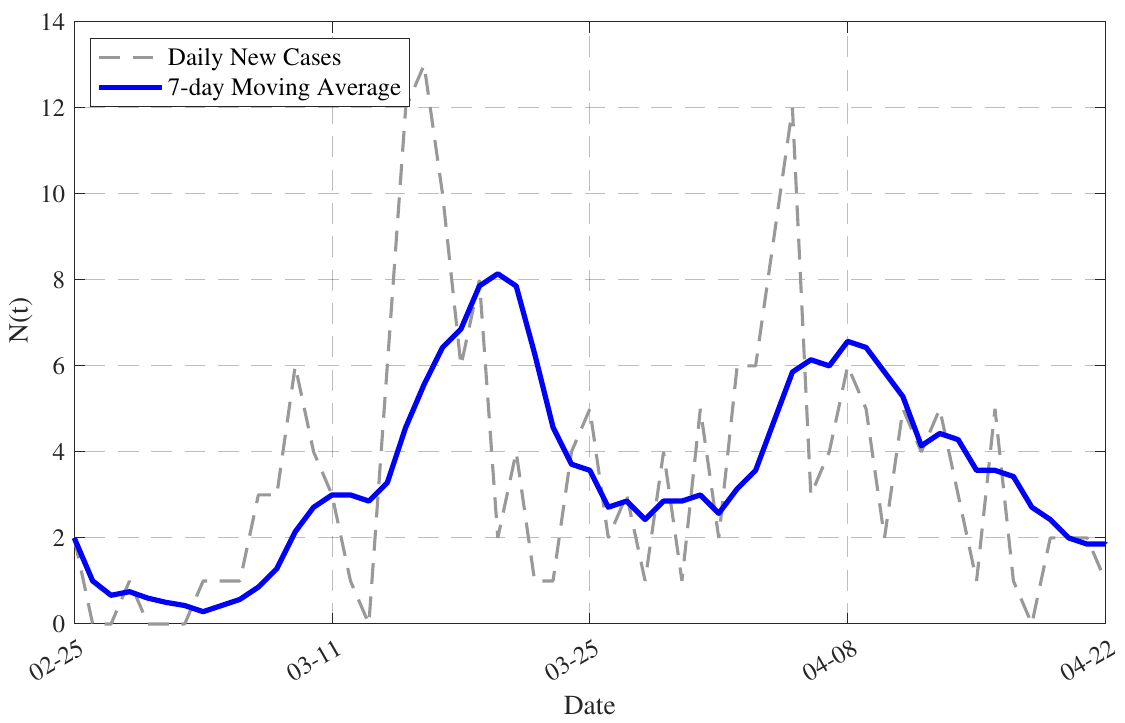}
     \caption{\textit{The gray dashed line in the figure represents the daily new infection counts reported in \cite{ref19}. Meanwhile, the blue solid line shows the corresponding time series processed using a rolling weekly average (7-day moving average).}}
        \label{fig:1}
 \end{figure}
    Using the data shown in Figure \ref{fig:1}, we calculate the reproductive power \( \mathcal{R}(t, a) \) for the infectious disease based on equations \hyperref[eq:5.10]{(\protect\ref*{eq:5.10})} and \hyperref[eq:5.12]{(\protect\ref*{eq:5.12})}. The MATLAB implementation consists of two functions in \url{https://github.com/LiJiayi-Jenny/Kermack-McKendrick-model-with-age-of-infection-and-reinfection.git}. We present the result of computed $\mathcal{R}(t,a)$ in Figure \ref{fig:2}.
   
    \begin{figure}
        \centering
        \includegraphics[width=0.7\linewidth]{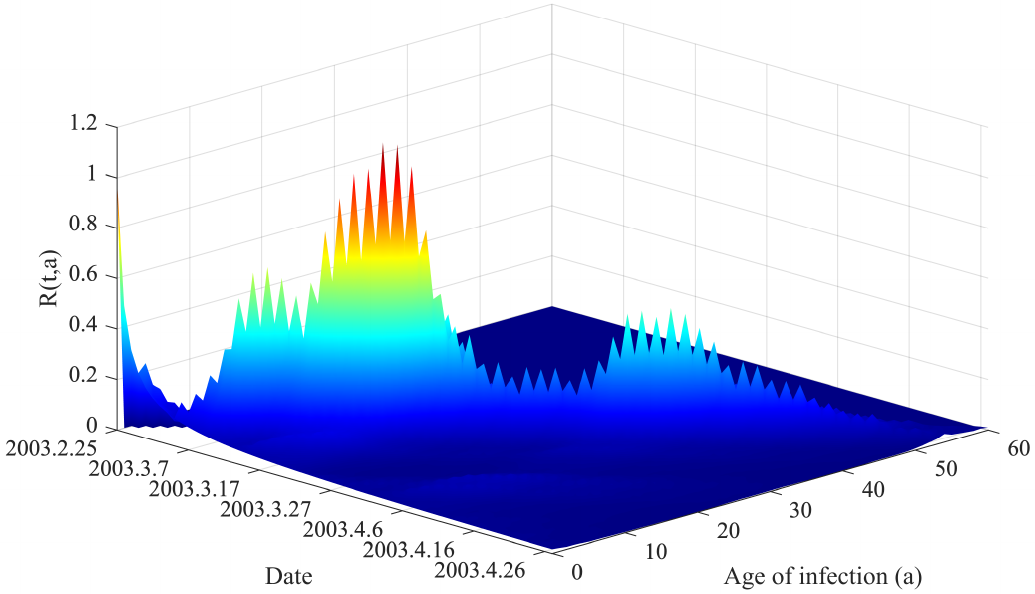}
        \caption{\textit{This figure depicts the variation of the reproductive power $\mathcal{R}(t,a)$ with respect to time $t$ and infection age $a$ from February 25, 2003, to April 26, 2003, with the parameter set to $a_+=60$ in Assumption \ref{ASS2.1}, $w=1$ in equation \hyperref[eq:5.12]{(\protect\ref*{eq:5.12})}, and $I_0=2$ in \hyperref[eq:2.7]{(\protect\ref*{eq:2.7})}.}}
        \label{fig:2}
    \end{figure}
    Fixing the parameters \( t \) and \( a \) from Figure \ref{fig:2}, we obtain a figure of cross section in Figure \ref{fig:3}.
\begin{figure}
    \centering
    \includegraphics[width=0.7\linewidth]{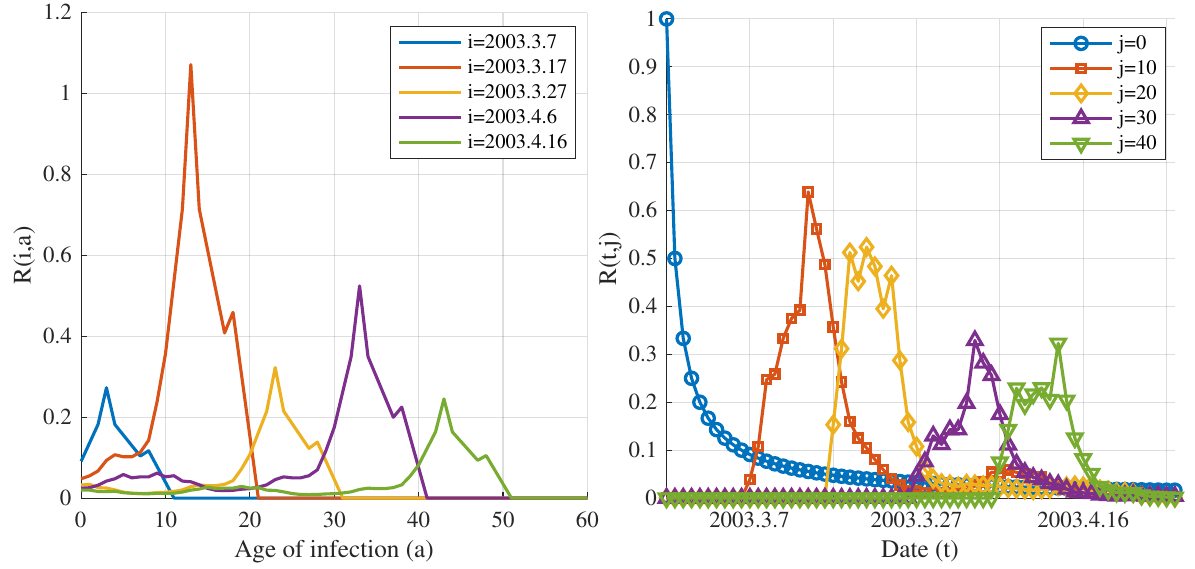}
    \caption{\textit{The left panel depicts the variation of the reproductive power $\mathcal{R}(i,a)$ as a function of infection age $a$, for five different dates $i$ at 10-day intervals. All five curves (blue, red, yellow, purple, green) exhibit a similar pattern, characterized by an initial rise to a peak followed by a brief fluctuation and decay to zero.
The right panel shows the change in the reproductive power $\mathcal{R}(t,j)$ as a function of time $t$, for five different infection ages $j$ at 10-day intervals. Here, the curves display strikingly different behaviors: four of them (red, yellow, purple, green) rise, undergo several fluctuations, and then descend, while the blue curve decreases monotonically to zero.
Collectively, these figures highlight a key difference: the time-fixed reproduction number varies with infection age in a consistent pattern, whereas the age-of-infection-fixed reproduction number evolves over time in highly distinct ways.}}
    \label{fig:3}
\end{figure}
        \subsection*{\textrm{(b) Type 2: Ambiguous infection records}}
   
    During the early stages of the outbreak, there exists documentation loss in the records, leading to considerable uncertainty in quantifying early transmission. However, these records can be used with another method. On December 31, 2019, the World Health Organization (WHO) China Country Office was notified of cases of pneumonia of unknown origin detected in Wuhan, Hubei Province, China. From December 31, 2019, to January 3, 2020, Chinese national authorities reported 44 cases of pneumonia with undetermined causes to the WHO. During this reporting period, the cause remained unidentified. WHO's official data records began on January 20, 2020. Only sporadic reports from Wuhan were available between December 31, 2019, and January 19, 2020. Facing ambiguous data records, we can utilize the estimator \(CD'(t)\) from \hyperref[eq:4.1]{(\protect\ref*{eq:4.1})}. We estimate the cumulative number of deaths provided by the WHO \cite{ref53} and the Chinese government website \cite{ref54} in Figure \ref{fig:4}, starting from January 20, 2020.
    \begin{figure}
        \centering
        \includegraphics[width=0.7\linewidth]{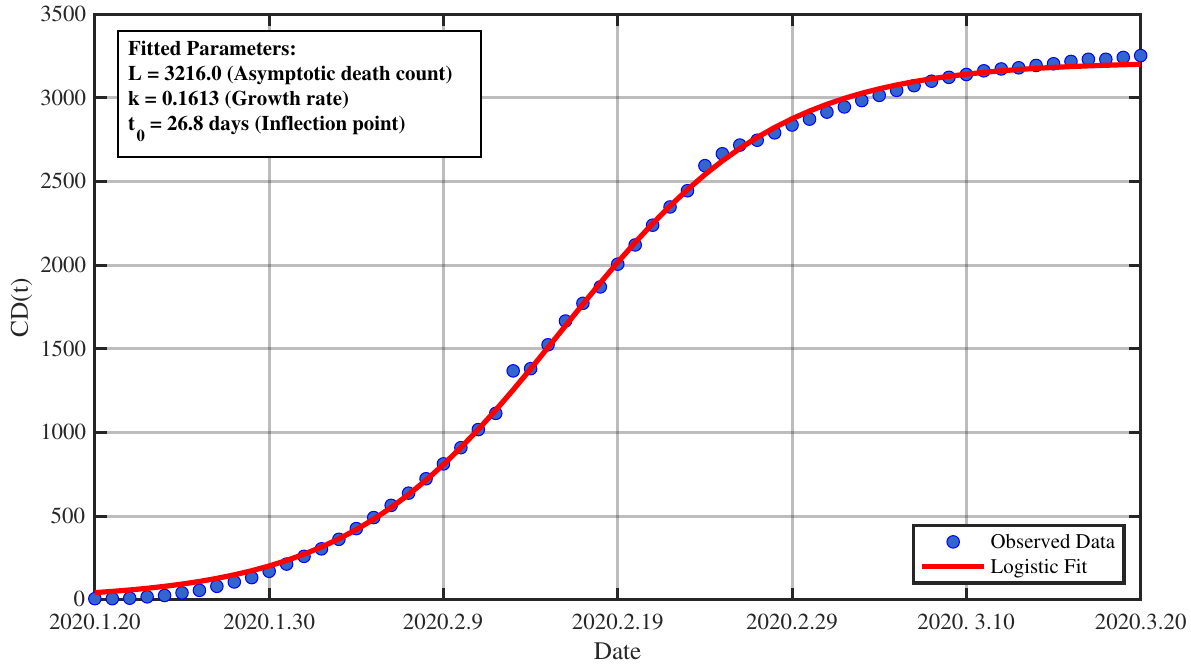}
        \caption{\textit{The fitting figure of function $CD(t)$:
        Blue points represent reported mortality data from \cite{ref53}. 
        The logistic function $CD(t) = \frac{L}{1 + e^{-k(t - t_0)}}$ 
        provides optimal fit $($red curve$)$. Fitted parameters: 
        Asymptotic death count $(L)$: $3216.0$; Growth rate $(k)$: $0.1613$; Inflection point $(t_0)$: $26.8$ days; $R^2$: $0.9993$.}
}
        \label{fig:4}
    \end{figure}
    The model in \hyperref[eq:2.2]{(\protect\ref*{eq:2.2})} assumes that the recovery rate $\nu$ and the mortality rate $D$ are constants. This assumption enables us to analyze the initial stage of the COVID-19 outbreak in China, spanning from December 31, 2019, to February 12, 2020. Based on \hyperref[eq:4.4]{(\protect\ref*{eq:4.4})} and \hyperref[eq:4.6]{(\protect\ref*{eq:4.6})}, We derive $N(t)$ and $I_0$ through the calculation of $CD'(t)$ and $CD''(t)$. During the early stage of the epidemic in China, mortality was primarily concentrated in Hubei Province. Based on \cite{ref57}, we adopt a mortality rate of $\nu = 0.88$ and a recovery rate of $D = 0.04$. Using equation \hyperref[eq:4.6]{(\protect\ref*{eq:4.6})}, we obtain the following expression based on the fitted $CD'(t)$ below.
    \begin{equation*}
        N(t)=\frac{Lke^{-k(t-t_0)}[(\nu+D-k)+(\nu+D+k)e^{-k(t-t_0)}]}{D(1+e^{-k(t-t_0)})^3}.\tag{8.1}  \label{eq:8.1}
    \end{equation*}
     We first present $N(t)$ and $CD(t)$ from December 31, 2019 in figure \ref{fig:5}.
    \begin{figure}
        \centering
        \includegraphics[width=0.8\linewidth]{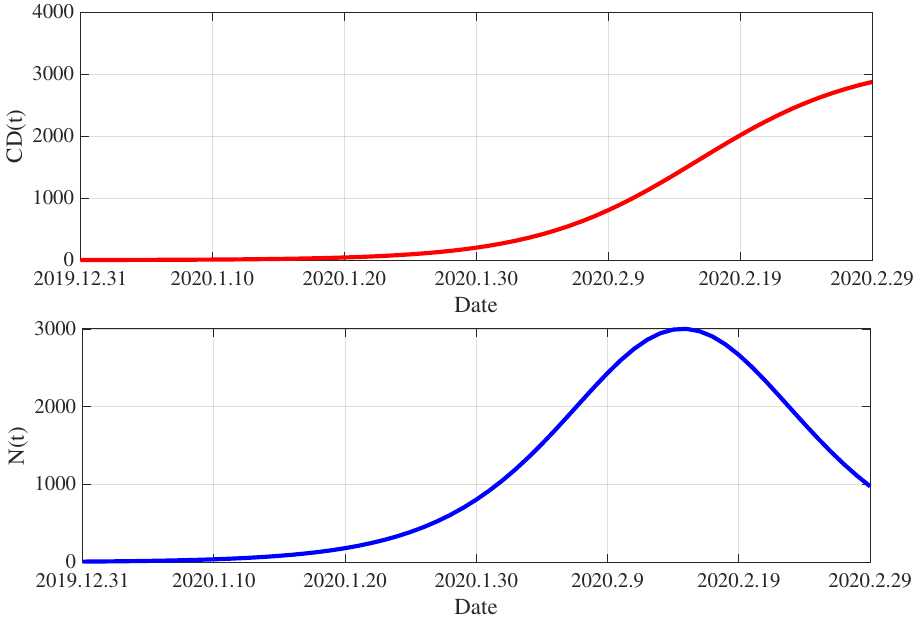}
        \caption{\textit{This figure illustrates the cumulative number of deaths $CD(t)=\frac{L}{1 + e^{-k(t - t_0)}}$ and the number of newly infected individuals $N(t)$ since December 31, 2020. The upper panel presents $CD(t)$, while the lower panel shows $N(t)$ in \hyperref[eq:8.1]{(\ref{eq:8.1})}.}}
        \label{fig:5}
    \end{figure}
Using the values of \( N(t) \) in Figure \ref{fig:5}, we calculate the reproduction number \( \mathcal{R}(t, a) \) from December 31, 2019, to February 29, 2020, using equation \hyperref[eq:5.10]{(\ref{eq:5.10})} and \hyperref[eq:5.12]{(\ref{eq:5.12})}.

\begin{figure}
    \centering
    \includegraphics[width=0.7\linewidth]{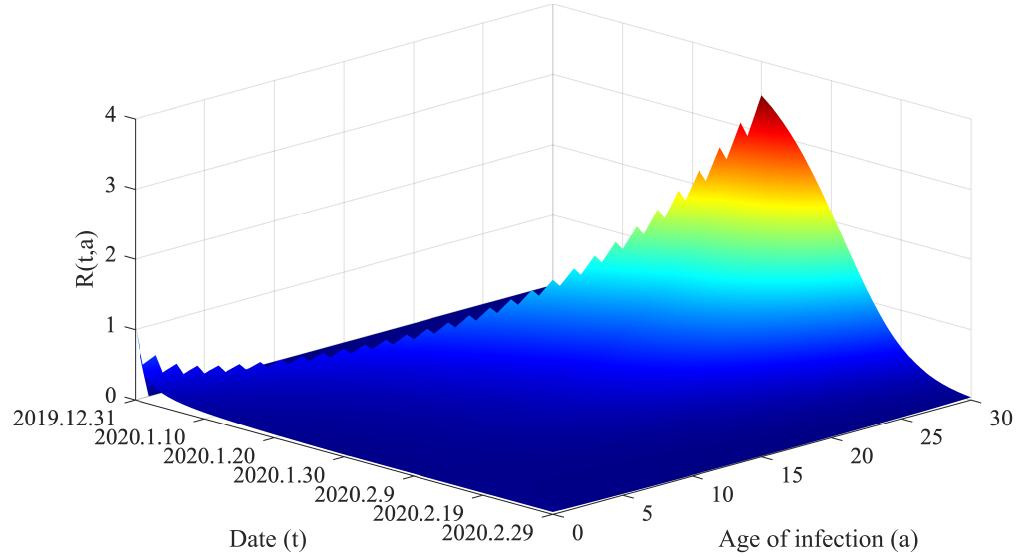}
    \caption{\textit{The figure illustrates the reproductive power \( \mathcal{R}(t, a) \). We set the observed duration \( a_+ = 30 \) days in Assumption \ref{ASS2.1}, with parameters \( \nu = 0.88 \), \( D = 0.04 \) based on \cite{ref57} , $ w=50$ in equation \hyperref[eq:5.12]{(\protect\ref*{eq:5.12})}, and \( I_0 = 6.8236 \) calculated by \hyperref[eq:4.4]{(\ref{eq:4.4})}.}}
    \label{fig:6}
\end{figure}
In this model, we assume that the recovery rate and mortality rate are constants. However, according to the records of the 2019 COVID-19 epidemic data in China provided in reference \cite{ref56}, it is only reasonable to treat the recovery and mortality rates as constants during the initial stage of the outbreak. As the epidemic evolves, these rates undergo significant changes. Therefore, we set the parameter $a_+ = 30$. Fixing the parameters \( t \) and \( a \) respectively from Figure \ref{fig:6}, we obtain a figure of cross section in Figure \ref{fig:7}.
\begin{figure}
    \centering
    \includegraphics[width=0.7\linewidth]{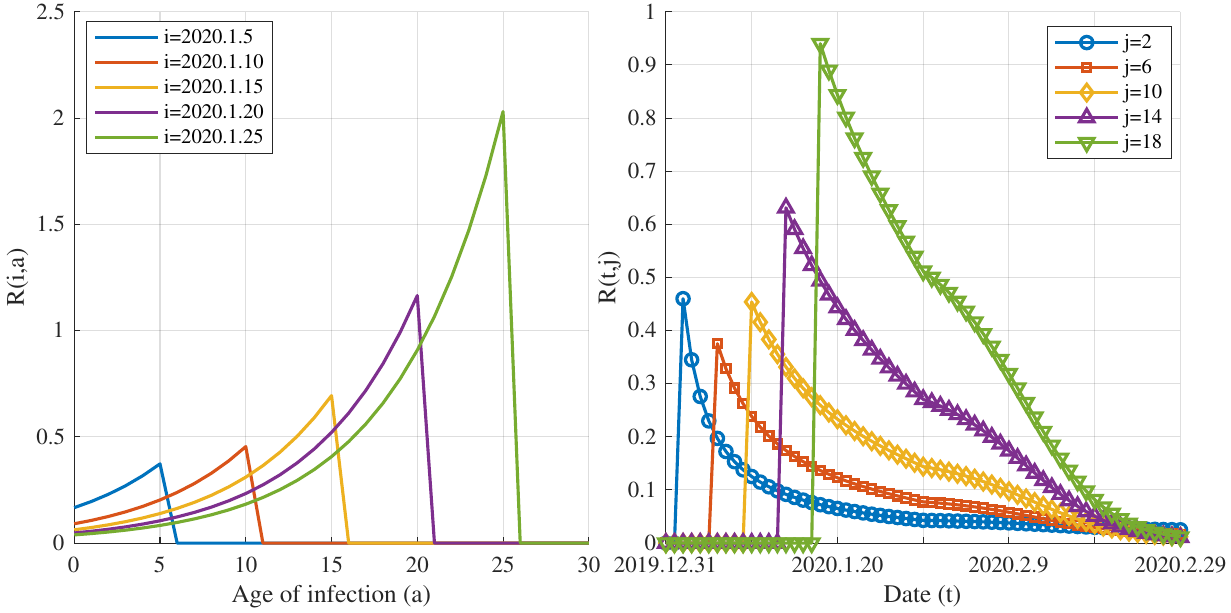}
    \caption{\textit{The left figure shows the variation of the reproductive power $\mathcal{R}(i,a)$ across different ages of infection at $5$ different dates $i$. Each curve, corresponding to a different date, initially rises, reaches a peak, and then declines to zero. The peak of each curve shifts to the right. This suggests that the reproductive number increases with the age of infection up to a certain point, after which it decreases. The peak reproductive number and the age at which it occurs both increase over time. The right figure displays the change in the reproductive power $\mathcal{R}(t,j)$ over time for five different ages of infection $j$. Each curve starts at a different level and decreases over time. The curves corresponding to higher ages of infection (e.g., \( j=18 \)) start higher and decrease faster compared to those with lower ages of infection (e.g., \( j=2 \)). This indicates that the reproductive number decreases over time for all ages of infection, but the rate and initial level of decrease vary depending on the age of infection. Both graphs show a dynamic relationship between the reproductive number and either the age of infection or the date, highlighting the changing nature of infectious disease spread over time and with infection progression.}}
    \label{fig:7}
\end{figure}

\section{Discussion}
\label{Section7}
Based on the Kermack-McKendrick model with age of infection and reinfection, this paper estimates the reproductive power $\mathcal{R}(t, a)$ from reported data, which is dependent on both time and age of infection. We establish the existence and uniqueness of solutions for $N(t)$ through the partial differential equation and the Volterra integral equation. Then, we propose a method for obtaining discrete estimates of $\mathcal{R}(t, a)$ from $N(t)$. Meanwhile, we have derived the expression for $\mathcal{R}(t,a)$ based on the cumulative number of deaths $CD(t)$ and parameters in model. This work provides a quantitative tool for long-term epidemic monitoring and precise control measures.

Regarding the key initial condition $i_0(a)$ of the model, this paper adopts the discussion on both single and multiple cohorts from reference \cite{ref2}. Under the condition that the model starts from a single cohort of infected individuals, the identification method for parameter $\mathcal{R}(t,a)$ can be further established in the discrete situation. The application in Section \ref{Section6} demonstrates the estimation results of $\mathcal{R}(t, a)$. Furthermore, the two-parameter reproduction number allows us to examine disease transmission along both the time and infection-age dimensions. As shown in Figure \ref{fig:3} and Figure \ref{fig:7}, it simultaneously displays the transmission potential among individuals of different infection ages at a fixed time point, as well as the temporal variation in transmission potential for individuals with a specific infection age.

To enhance the model's precision, it is vital to explore the form of $\beta(a)$ of different infectious diseases. Using our model with the data in Section \ref{Section4}, equation \hyperref[eq:4.10]{(\protect\ref*{eq:4.10})} shows that a deterministic expression for $\beta(a)$ enables us to obtain the expression for $\tau(t)$. Gaining a deeper biological understanding of $\beta(a)$ is crucial for improving the model’s accuracy. Current biological research primarily examines trends in pathogen load, particularly pathogen shedding. However, due to diverse host conditions, viral load (e.g., viral RNA levels) can only serve as a proxy for $\beta(a)$, which represents infectious virus \cite{ref33}. Therefore, while we can infer trends in viral dynamics, the intricate and precise relationship between viral load and $\beta(a)$ demands further investigation. Secondly, as indicated by the theorem, when $N(t)=0$ for $t=t_0+k$, with $k=1,2,\ldots$, the non-uniqueness of $\mathcal{R}(t,a)$ presents a significant challenge. How should we navigate this scenario? To address the occurrence of zero values in the discrete model, we identify several critical possibilities:
    \begin{enumerate}
        \item Missing data (unreported cases),
        \item Viral evolution influenced by herd immunity (where the virus creates variants, leading to substantial changes in its characteristics and resulting in unstable model parameters),
        \item Disease extinction caused by interventions such as isolation.
    \end{enumerate}
    The first and third cases can be effectively managed within the current model framework (e.g., employing data preprocessing techniques for missing data). However, the second case requires a more sophisticated approach, reflecting the complexities of viral evolution. Finally, our study currently omits considerations of age-of-infection-dependent recovery rates and latency periods. Future research can delve deeply into these aspects, as they may significantly influence model outcomes and real-world applications.
\section*{Acknowledgments}
This work is supported by the National Natural Science Foundation of China (No. 12271044, 12071297).
\section*{Conflict of interest}

The authors declare there is no conflict of interest.
\newpage
	
\newpage
\section*{Appendix}
\appendix
\renewcommand{\thesection}{\Alph{section}} 
\section{The prove of Lemma \ref{lem4.2}}\label{app:A}
\begin{proof}
    Let $\varepsilon>0$. We observe that
		\begin{align*}
        \Lambda_{\kappa}(t)-I_{0}e^{-(\nu+D)(t-t_{0})}\beta(t-t_0)& =
			 I_{0}e^{-(\nu+D)(t-t_{0})}\int_{0}^{+\infty}[\beta(a+(t-t_{0}))-\beta(t-t_0)]\kappa
			p(\kappa a)da\\
			&  =I_{0}e^{-(\nu+D)(t-t_{0})}\int_{0}^{\eta}\left[\beta(a+(t-t_{0}%
			))-\beta(t-t_{0})\right]\kappa p(\kappa a)da\\
			& \,\,\,\,\,\,\, +I_{0}e^{-(\nu+D)(t-t_{0})}\int_{\eta}^{+\infty}\left[\beta(a+(t-t_{0}))-\beta(t-t_{0})\right]\kappa p(\kappa a)da.
		\end{align*}
		There exists $t_{1}>t_{0}$ such that 
		
		\[
		I_{0}e^{-(\nu+D)(t-t_{0})}\underset{a\geq0}{\sup}\beta(a)\leq
		\frac{\varepsilon}{4},\,\forall t\in[t_{1},+\infty).
		\]
		Let $\eta>0$ be such that %
		
		\[
		a\leq\eta\Rightarrow|\beta(a+t)-\beta(t)|\leq\frac{\varepsilon}{2I_0},\forall
		t\in\lbrack t_{0},t_{1}].
		\]
		Then we have %
         \begin{equation*}
    \begin{aligned}
        &\left|I_{0}e^{-(\nu+D)(t-t_{0})}\int_{0}^{\eta}\left[\beta(a+(t-t_{0}%
			))-\beta(t-t_{0})\right]\kappa p(\kappa a)da\right| \\
        &\leq \left\{  
            \begin{aligned}
        &I_{0}e^{-(\nu+D)(t-t_{0})}2\sup_{a\geq0}\beta(a), && \text{if } t \in [t_{1},+\infty), \\
        &I_{0}e^{-(\nu+D)(t-t_{0})}\int_{0}^{\eta}\frac{\varepsilon}{2}\kappa p(\kappa a)da, && \text{if } t \in [t_{0},t_{1}],
         \end{aligned}
        \right. 
    \end{aligned}
\end{equation*}
		therefore,%
		\begin{equation*}
			\left|\Lambda_{\kappa}(t)-I_{0}e^{-(\nu+D)(t-t_{0})}\beta(t-t_{0})\right|\leq\frac{\varepsilon}{2}+\left|I_{0}e^{-(\nu+D)(t-t_{0})}\int_{\eta}^{\infty
			}\left[\beta(a+(t-t_{0}))-\beta(t-t_{0})\right]\kappa p(\kappa a)da\right|.
		\end{equation*}
		According to
		\begin{equation*}
			\left|I_{0}e^{-(\nu+D)(t-t_{0})}\int_{\eta}^{\infty
			}\left[\beta(a+(t-t_{0}))-\beta(t-t_{0})\right]\kappa p(\kappa a)da\right|\leq2I_{0}e^{-(\nu+D)(t-t_{0})}\underset{a\geq0}{\sup}\beta(a)\left(1-\int_{0
			}^{\eta}\kappa p(\kappa a)da\right),
		\end{equation*}
		and,%
		
		\begin{align*}
			&  2I_{0}e^{-(\nu+D)(t-t_{0})}\underset{a\geq0}{\sup}\beta(a)\left(1-\int_{0
			}^{\eta}\kappa p(\kappa a)da\right)=2I_{0}e^{-(\nu+D)(t-t_{0})}\underset{a\geq0}{\sup}\beta(a)\left(1-\int_{0
			}^{\kappa\eta}p(a)da\right)\rightarrow0,as\text{ }\kappa\rightarrow\infty,
		\end{align*}
	    we obtain,
        \begin{align*}
    &\left|I_{0}e^{-(\nu+D)(t-t_{0})}\int_{\eta}^{\infty
			}\left[\beta(a+(t-t_{0}))-\beta(t-t_{0})\right]\kappa p(\kappa a)da\right|\\
        &\leq2I_{0}e^{-(\nu+D)(t-t_{0})}\underset{a\geq0}{\sup}\beta(a)\left(1-\int_{0
			}^{\eta}\kappa p(\kappa a)da\right)\rightarrow0,as\text{ }\kappa\rightarrow\infty.
\end{align*}
	
\end{proof}
\section{If \( N_\kappa(t) \) exists in $\Omega
$, \( N_{\kappa }(t) \) forms a fundamental sequence of functions. Therefore, there exists a function \( A(t) \) such that 
\( \underset{\kappa \rightarrow \infty }{\lim } N_{\kappa }(t) := A(t) \). }\label{app:B}
\begin{proof}
        Appendix \ref{app:B} only needs to be proven to hold on a bounded interval $[t_0,t_0+T]$. Other bounded closed intervals in $[t_0,+\infty)$ can be similarly proven.
		Firstly, find the norm of the difference between $S_{\kappa_{1}}(t)$ and $S_{\kappa_{2}}(t)(\kappa_1,\kappa_2\in \mathbb{N})$, which means
		\begin{align*}
			S_{\kappa}:=\left\Vert S_{\kappa_{1}}(t)-S_{\kappa_{2}}(t)\right\Vert  &  =||-\int_{t_{0}}^{t}%
			N_{\kappa _{1}}(m)-\delta\{\nu\int_{t_{0}}^{m}e^{-\delta(m-\tau)}[\int_{0}^{\tau-t_{0}%
			}e^{-(\nu+D)a}N_{\kappa _{1}}(\tau-a)da]d\tau\}dm\\
			&  +\int_{t_{0}}^{t}N_{\kappa _{2}}(m)-\delta\{\nu\int_{t_{0}}^{m}e^{-\delta(m-\tau
				)}[\int_{0}^{\tau-t_{0}}e^{-(\nu+D)a}N_{\kappa _{2}}(\tau-a)da]d\tau\}dm||\\
			& \leq(1+\nu\delta T^{2})T\left\Vert N_{\kappa _{1}}-N_{\kappa _{2}}\right\Vert.
		\end{align*}
By Assumption \ref{ASS2.1}, there exists a constant $\tau_{max}>0$ such that 
        \begin{equation*}
            ||\tau(t)||\leq \tau_{max}.
        \end{equation*}
		According to equation \hyperref[eq:3.6]{(\ref*{eq:3.6})}, $\forall\varepsilon>0$, there exists an integer $K>0$, $\forall\,\,\kappa_{1}, \kappa_{2}>K$ such that
		\begin{equation*}
			\Lambda:=\left\Vert \tau (t)S_{\kappa _{1}}(t)\Lambda _{\kappa _{1}}(t)-\tau
			(t)S_{\kappa _{2}}(t)\Lambda _{\kappa _{2}}(t)\right\Vert <\varepsilon+I_0\tau_{max}(1+\nu\delta T^{2})T\left\Vert N_{\kappa _{1}}-N_{\kappa _{2}}\right\Vert .
		\end{equation*}
        There exists $\alpha>0$ such that whenever $T<\alpha$ holds
		\begin{align*}
			\left\Vert N_{\kappa _{1}}(t)-N_{\kappa _{2}}(t)\right\Vert 
			& \leq \Lambda+\tau_{max}S_{\kappa}\left\Vert \int_{0}^{t-t_{0}}\beta (a)e^{-(\nu +D)a}N_{\kappa _{2}}(t-a)da\right\Vert\\&+\tau_{max}\left\Vert S_{\kappa _{1}}(t)\int_{0}^{t-t_{0}}\beta (a)e^{-(\nu
				+D)a}\left[N_{\kappa _{1}}(t-a)-N_{\kappa _{2}}(t-a)\right]da\right\Vert
			\\
			& \leq m\left\Vert N_{\kappa _{1}}(t)-N_{\kappa _{2}}(t)\right\Vert
			+\varepsilon, 0\leq m<1,
		\end{align*}
		equally, $N_{\kappa }(t)$ is a fundamental sequence of functions in $\Omega$.
	\end{proof}

\section{$\underset{\kappa \rightarrow \infty }{\lim }N_{\kappa }(t)$ is
	equal to $A(t)$ in existence, then the limit value $A(t)$ satisfies equation \hyperref[eq:3.13]{(\ref*{eq:3.13})} in every closed and bounded interval of $[t_0,+\infty)$
    .}\label{app:C}
	\begin{proof}
		Based on Assumption \ref{ASS2.1}, equation \hyperref[eq:3.6]{(\ref*{eq:3.6})}, the relationship between the Riemann and Lebesgue integrals, and the dominated convergence theorem, we obtain
		\begin{equation*}
			\underset{\kappa\rightarrow\infty}{\lim}N_{\kappa}(t)=\tau(t)\underset{%
				\kappa \rightarrow\infty}{\lim}S_{\kappa}(t)[\underset{\kappa\rightarrow
				\infty}{\lim}\Lambda_{\kappa}(t)+\int_{0}^{t-t_{0}}\beta(a)e^{-(\nu +D)a}
			\underset{\kappa\rightarrow\infty}{\lim}N_{\kappa}(t-a)da],
		\end{equation*}
		and
		\begin{align*}
			\underset{\kappa \rightarrow \infty }{\lim }S_{\kappa }(t)& =\underset{%
				\kappa \rightarrow \infty }{\lim }F(N_{\kappa }(t)) \\
			& =S_{0}-\int_{t_{0}}^{t}\underset{\kappa \rightarrow \infty }{\lim }%
			N_{\kappa }(m)-\delta \{e^{-\delta (m-t_{0})}R_{0}+\nu
			\int_{t_{0}}^{m}e^{-\delta (m-\tau )}[e^{-(\nu +D)(\tau -t_{0})}I_{0} \\
			&+\int_{0}^{\tau -t_{0}}e^{-(\nu +D)a}\underset{\kappa \rightarrow \infty }{%
				\lim }N_{\kappa }(\tau -a)da]d\tau \}dm \\
			& =F(\underset{\kappa \rightarrow \infty }{\lim }N_{\kappa }(t))=F(A(t)).
		\end{align*}
		Hence, $A(t)$ satisfies
		\begin{equation*}
			A(t)=\tau(t)F(A(t))[I_{0}\Gamma(t-t_{0})+\int_{0}^{t-t_{0}}\Gamma
			(a)A(t-a)da] .
		\end{equation*}
	\end{proof}

\end{document}